\documentclass[submission,copyright,creativecommons]{eptcs}

\usepackage{iftex}

\usepackage{amsmath,amssymb,amsthm,mathrsfs}
\usepackage{stmaryrd}
\usepackage{iftex}
\usepackage{tabularx}
\usepackage[ruled, vlined]{algorithm2e}
\usepackage{tikz}
\usepackage{graphicx}

\usepackage[most]{tcolorbox}
\usetikzlibrary{positioning,arrows.meta}

\newtheorem{theorem}{Theorem}
\newtheorem{lemma}[theorem]{Lemma}

\newtheorem{corollary}[theorem]{Corollary}
\newtheorem{claim}[theorem]{Claim}
\theoremstyle{definition}

\renewcommand{\phi}{\varphi}

\ifpdf
  \usepackage{underscore}         
  \usepackage[T1]{fontenc}        
\else
  \usepackage{breakurl}           
\fi

\title{The Model Checking Problem for Distributed Knowing How is $\Delta_2^p$-Complete}

\author{
Ziqi Wang
\institute{ILLC, University of Amsterdam}
\email{ziqi.wang6@student.uva.nl}
\and
Ronald de Haan
\institute{ILLC, University of Amsterdam}
\email{me@ronalddehaan.eu}
}
\def\titlerunning{The Model Checking Problem for Distributed Knowing How is $\Delta_2^p$-Complete}
\def\authorrunning{Wang \& de Haan}
\begin{document}
\maketitle

\begin{abstract}
We investigate the complexity of the model checking problem for distributed knowing how. We show that the problem is $\Delta_2^p$-complete.
\end{abstract}

\section{Introduction}

Knowing-how logics study agents' abilities to achieve goals by suitable actions. This makes them particularly relevant to artificial intelligence, especially to planning, where the main question is whether a desired objective can be guaranteed by some plan. The logic of distributed knowing how \cite{DKH2025} generalizes two existing frameworks of knowing how. The planning-based framework, initiated in \cite{Wang2015, Wang2018}, treats knowing how as the existence of a multi-step plan that can be executed, always terminates, and guarantees the goal, while the coalition-based framework, initiated in \cite{TogetherWeKnowHow2017, TogetherWeKnowHow2018}, treats group know-how as the existence of a one-step joint action that ensures the goal.

In this work, we prove the model checking problem for the logic of distributed knowing how is ${\Delta_2^p}$-complete. More precisely, we introduce a model checking algorithm in $\Delta_2^p$, the class of problems decidable in polynomial time with polynomially many calls to an $\mathsf{NP}$ oracle. The main difficulty lies in the $Kh_G$-modality. Instead of explicitly constructing the distributed action set $A_G^*$, our procedure first computes the satisfaction set of the inner formula and then runs a fixpoint algorithm over the quotient set $S/{\sim_G}$. We show that the problem is ${\Delta_2^p}$-hard, by a reduction from \textsc{SNSAT} \cite{SNSAT}. Together both results, we obtain a tight bound for this problem.

\section{Preliminaries}

Let $\mathcal{P}$ be a denumerable set of proposition symbols and let $I$ be a finite set of agents. The language $\mathsf{DKH}$ is defined by the grammar
\[
\varphi ::= \top \mid p \mid \neg\varphi \mid (\varphi \wedge \varphi) \mid K_G\varphi \mid Kh_G\varphi
\]
where $p\in \mathcal{P}$ and $G\subseteq I$. A model $\mathcal{M}$ is a tuple $\left\langle
S,\{\sim_i\}_{i\in I},\{A_G\}_{G\subseteq I},
\left\{\xrightarrow{a}\mid a\in \bigcup_{G\subseteq I}A_G\right\},V
\right\rangle,$
where $S$ is a set of states; $\sim_i$ is an equivalence relation on $S$ for each $i\in I$; $A_G$ is a set of atomic group actions for each $G\subseteq I$ such that: $G\subsetneq H$ implies $A_G\cap A_H=\emptyset$; $A_{\emptyset}=\emptyset$; $\xrightarrow{a}$ is a binary relation on $S$ for each $a\in \bigcup_{G\subseteq I}A_G$, called the transition relation of $a$; $V:\mathcal{P}\to 2^S$ is a valuation function. The family $\{A_G\}_{G\subseteq I}$ is part of the model, so the input size depends on how this family is represented. We assume an explicit representation in which only non-empty $A_G$'s are listed.

For each group $G\subseteq I$, the distributed indistinguishability relation is defined by $\sim_G := \bigcap_{i\in G}\sim_i$. We write $[s]_G$ for the $\sim_G$-equivalence class of $s$, and $[S]_G$ for the corresponding quotient set. We also write $A_G^+:=\bigcup_{H\subseteq G}A_H$ for the set of atomic actions available to subgroups of $G$.

Fix an enumeration $I=\{i_1,\ldots,i_n\}$. For every nonempty group $G=\{i_{n_1},\ldots,i_{n_k}\}$, let $\min G:=\min\{n_1,\ldots,n_k\}$. For mutually disjoint nonempty groups $G$ and $H$, define $G\prec H$ iff $\min G<\min H$. For each group $G$, its distributed action set is defined as $A_G^*:=A_G^+\cup\{\langle d_1,\ldots,d_n\rangle\in A_{G_1}^*\times\cdots\times A_{G_n}^*\mid \{G_1,\ldots,G_n\}$ is a non-trivial partial partition of $G$ and $G_1\prec\cdots\prec G_n\}$. Here, a non-trivial partial partition of a set $X$ is a partition of a not necessarily proper subset of $X$ that is not a singleton set. It follows that $A_\emptyset^*=\emptyset$ and $A_{\{i\}}^*=A_{\{i\}}$ for each $i\in I$.

For each distributed action $d=\langle d_1,\dots,d_n\rangle\in A_I^*$, the distributed transition relation is defined by $\xrightarrow{d}:=\bigcap_{1\le k\le n}\xrightarrow{d_k}$. A distributed action $d\in A_I^*$ is executable on a nonempty set $X\subseteq S$ if, for every $s\in X$, there is a state $t\in S$ such that $s\xrightarrow{d}t$. A strategy for a group $G$ is a partial function $\sigma_G:[S]_G\to A_G^*$ such that, whenever $[s]_G\in\mathrm{dom}(\sigma_G)$, the action $\sigma_G([s]_G)$ is executable on $[s]_G$. In particular, the empty
function is a strategy.

A possible execution of $\sigma_G$ starting from $[s]_G$ is a nonempty finite or infinite sequence $[s_1]_G[s_2]_G\cdots$ with $[s_1]_G=[s]_G$ such that, for every $j\geq 1$ for which $[s_{j+1}]_G$ occurs in the sequence, $[s_j]_G\xrightarrow{\sigma_G([s_j]_G)}[s_{j+1}]_G$. A possible execution is complete if it is infinite, or if it is finite and its last class does not belong to $\mathrm{dom}(\sigma_G)$. If a finite possible execution is of the form $[s_1]_G\cdots[s_n]_G$, then $[s_n]_G$ is its leaf-node and every $[s_j]_G$, with $1\le j<n$, is an inner-node. If it is infinite, then all its nodes are inner-nodes. For all complete executions starting from $[s]_G$, we denote their leaf-nodes and inner-nodes by $\mathrm{CELeaf}(\sigma_G,s)$ and $\mathrm{CEInner}(\sigma_G,s)$, respectively.

The semantics is defined as follows. Given a model $\mathcal{M}$, a state $s\in S$, and a formula $\varphi\in \mathsf{DKH}$, we omit the Boolean cases.

\begin{tabularx}{\textwidth}{lclX}

$\mathcal{M}, s \models K_G \varphi$ & \textit{iff} & & $\mathcal{M}, s' \models \varphi$ for all $s' \in [s]_G$ \\
$\mathcal{M}, s \models Kh_G \varphi$ & \textit{iff} & & there is a strategy $\sigma_G$ of $G$ such that: (1) $[t]_G \subseteq \llbracket \varphi \rrbracket^{\mathcal{M}}$ for all $[t]_G \in \mathrm{CELeaf}(\sigma_G,s)$, and (2) all complete executions of $\sigma_G$ starting from $[s]_G$ are finite. 
\end{tabularx}

\section{The Upper Bound}

The model checking procedure $\textsc{CheckerDKH}(\mathcal M,s,\varphi)$ (Algorithm~\ref{checkDKH}) decides whether the input state $s$ satisfies the formula $\varphi$ on a finite model $\mathcal M$. It is defined via an evaluation procedure $\textsc{Eval}(\mathcal M,\varphi)$, which computes the satisfaction set of $\varphi$ bottom-up, following the structure of formulas. 

Thus, $\textsc{CheckerDKH}(\mathcal M,s,\varphi)$ returns true iff $s\in \textsc{Eval}(\mathcal M,\varphi)$. The Boolean cases and the $K$-modality are treated as usual. The non-trivial case is the evaluation of a $Kh_G$-formula. For such a formula $Kh_G\psi$, Algorithm~\ref{algorithm for evalkh} first computes $\textsc{Eval}(\mathcal M,\psi)$ and then performs a monotone fixpoint construction over the quotient set $S/{\sim_G}$. Starting from the $\sim_G$-equivalence classes already contained in the  set, the algorithm repeatedly adds a class $E$ whenever there exists a distributed action that is executable on $E$ and whose possible successor classes are all contained in the current approximation.

The existence of such a witness is checked by an $\mathsf{NP}$ oracle. Algorithm~\ref{HasWinDistribution} gives the corresponding nondeterministic procedure. Note that an actual distributed action can be nested, but that it can always be flattened via Proposition 14 in \cite{DKH2025}. In particular, $\textsc{IsValidDAction}$ (Algorithm~\ref{isValidAction}) checks whether the guessed tuple can be viewed as an element of $A_G^*$, by assigning each action to a suitable subgroup and verifying that the chosen subgroups are defined and pairwise disjoint. Once the fixpoint is reached, $\textsc{EvalKh}$ returns the union of all states belonging to the winning equivalence classes.

\begin{algorithm}\label{checkDKH}
  \caption{{Model Checking Procedure}}
  \SetKwProg{Fn}{Procedure}{:}{}
  \Fn{\textsc{CheckerDKH}$(\mathcal{M},s,\phi)$}{
    \Return $s\in \textsc{Eval}(\mathcal{M},\phi)$
  }

  \SetKwProg{Fn}{Procedure}{:}{}
  
  \Fn{\textsc{Eval}{$(\mathcal{M}, \phi)$}}{
    \Switch{$\phi$}{
      \Case{$\phi \equiv \top$}{
        \Return $S$ \;
      }
      \Case{$\phi \equiv p$}{
        \Return $V(p)$ \;
      }
      \Case{$\phi \equiv \neg \psi$}{
        \Return $S \setminus \textsc{Eval}(\mathcal{M}, \psi)$ \;
      }
      \Case{$\phi \equiv \psi \wedge \chi$}{
        \Return $\textsc{Eval}(\mathcal{M}, \psi) \cap \textsc{Eval}(\mathcal{M}, \chi)$ \;
      }
      \Case{$\phi \equiv K_G \psi$}{
        \Return $\textsc{EvalK}(\mathcal{M}, G, \psi)$ \;
      }
      \Case{$\phi \equiv Kh_G \psi$}{
        \Return $\textsc{EvalKh}(\mathcal{M}, G, \psi)$ \;
      }
    }
  }
\end{algorithm}

\begin{algorithm}[H]
  \small
  \caption{{Evaluation of K-modality}}
  \SetKwProg{Fn}{Procedure}{:}{}
  \Fn{\textsc{EvalK}{$(\mathcal{M},G,\phi)$}}{
      $X \gets \textsc{Eval}(\mathcal{M},\phi)$ \;
      $\sim_G \gets \bigcap_{i \in G}\sim_i$ \;
      $\mathcal{C}_G \gets S/\sim_G$ \;
      $L \gets \emptyset$ \;
      \ForEach{$E \in \mathcal{C}_G$}{
          \If{$E \subseteq X$}{
            $L \gets L \cup E$ \;
          }
      }
      \Return $L$ \;
  }
\end{algorithm}

\begin{algorithm}\label{algorithm for evalkh}
\footnotesize
\caption{{Evaluation of Kh-modality}}
\SetKwProg{Fn}{Function}{:}{}
\Fn{\textsc{EvalKh}{$(\mathcal{M},G,\varphi)$}}{
    $X \gets \textsc{Eval}(\mathcal{M},\varphi)$ ;
    $\sim_G \gets \bigcap_{i\in G}\sim_i$ ;
    $\mathcal{C}_G \gets S/\sim_G$ ;
    $W \gets \{E \in \mathcal{C}_G \mid E \subseteq X\}$ \;
    $changed \gets \textnormal{True}$ \;
    \While{$changed$}{
        $changed \gets \textnormal{False}$ \;
        \ForEach{$E \in \mathcal{C}_G \setminus W$}{
            \If{$\textsc{HasWinDistribution}(\mathcal{M},G,E,W)$}{
                $W \gets W \cup \{E\}$ \;
                $changed \gets \textnormal{True}$ \;
            }
        }
    }
    \Return $\bigcup_{E\in W} E$\;
}
\end{algorithm}

\begin{algorithm}\label{HasWinDistribution}
\footnotesize
\caption{{A nondeterministic procedure}}
\SetKwProg{Fn}{Function}{:}{}
\Fn{\textsc{HasWinDistribution}{$(\mathcal{M},G,E,W)$}}{
    \textbf{nondeterministically guess}
    $(m,\langle (H_1,a_1),\dots,(H_m,a_m)\rangle)$,
    where $1\le m\le |G|$, $H_i\subseteq G$, and
    $a_i\in A_G^+$\;
    \If{not $\textsc{IsValidDAction}(\mathcal{M},G,\langle (H_1,a_1),\dots,(H_m,a_m)\rangle)$}
    {
        \Return \textnormal{False} \;
    }
    $\to^d \gets \bigcap_{k=1}^m \to^{a_k}$\;
    \If{$\bigl(\forall x\in E\,\exists y\in S\; x\to^{d}y\bigr)
      \land
      \bigl(\forall x\in E\,\forall y\in S\;(x\to^{d}y \Rightarrow [y]_G\in W)\bigr)$}{
    \Return \textnormal{True}\;
}
    \Return \textnormal{False} \;
}
\end{algorithm}

\begin{algorithm}\label{isValidAction}
\footnotesize
\caption{{Determine whether the guessed tuple represents a valid distributed action.}}
\SetKwProg{Fn}{Function}{:}{}
\Fn{\textsc{IsValidDAction}{$(\mathcal{M},G,\langle (H_1,a_1),\dots,(H_m,a_m)\rangle)$}}{
     \If{$m<1$}{
        \KwRet{\textnormal{False}}\;
    }

    \For{$i\gets 1$ \KwTo $m$}{
        \If{$H_i=\emptyset$ or $H_i\not\subseteq G$ or $a_i\notin A_{H_i}$}{
            \KwRet{\textnormal{False}}\;
        }
    }

    \For{$1\le i<j\le m$}{
        \If{$H_i\cap H_j\neq\emptyset$}{
            \KwRet{\textnormal{False}}\;
        }
    }

    \KwRet{\textnormal{True}}\;
}
\end{algorithm}

\subsection{Correctness}

It is easy to see that 

\medskip

\begin{lemma}\label{termination of hasWinDistribution}
  $\textsc{HasWinDistribution}$ always terminates.
\end{lemma}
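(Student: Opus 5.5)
The plan is to show that every computation path of the nondeterministic procedure $\textsc{HasWinDistribution}$ halts after finitely many steps; this is what termination means for a nondeterministic procedure. Since the model $\mathcal M$ is finite, every object the procedure manipulates is finite. The proof goes through the procedure line by line and bounds each step.

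First, the guessing phase. The procedure guesses $m$ with $1\le m\le |G|\le |I|$, subsets $H_1,\dots,H_m\subseteq G$, and actions $a_1,\dots,a_m\in A_G^+$. Here $A_G^+=\bigcup_{H\subseteq G}A_H$ is finite, because only finitely many nonempty $A_H$ are listed explicitly in the input. So the guess is a tuple of bounded length over a finite alphabet, and each branch produces its guess in finitely many steps (indeed polynomially many in the input size).

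Second, the call to $\textsc{IsValidDAction}$. It consists of one comparison on $m$, a loop of $m$ iterations, and a double loop over pairs $1\le i<j\le m$. Each iteration performs finitely many checks: emptiness, inclusion $H_i\subseteq G$, membership $a_i\in A_{H_i}$, and the intersection $H_i\cap H_j$. All of these act on finite sets. Hence $\textsc{IsValidDAction}$ terminates, and if it returns False the procedure stops immediately.

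Third, the remaining steps. Computing $\to^d=\bigcap_{k=1}^m\to^{a_k}$ intersects $m$ finite relations on the finite set $S$. The final test quantifies over $x\in E$ and $y\in S$, both finite. Checking $[y]_G\in W$ needs only the finite relation $\sim_G$ and the finite set $W\subseteq S/{\sim_G}$. So every step takes finitely many operations, and every branch ends with a \textbf{return}.

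The only real subtlety is conceptual rather than technical. $A_G^*$ may contain nested distributed actions, but the procedure never enumerates $A_G^*$; it guesses only flat tuples of bounded length. This is the point at which I would stress that the search space is bounded by $|G|$ and $|A_G^+|$, so no branch can run forever.
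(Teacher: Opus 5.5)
Your proof is correct and takes the same approach the paper intends. The paper gives no written proof beyond ``it is easy to see'', and your line-by-line check supplies exactly the routine details it leaves to the reader. Every branch makes a bounded guess over subgroups of $G$ and actions in the finite set $A_G^+$, then runs finitely many loops and set operations on the finite sets $S$, $E$ and $W$, and ends in a return.
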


Essentially, $\textsc{EvalKh}$ performs a smallest fixpoint computation. During the computation, it generates an increasing chain of sets of equivalence classes. Once the fixpoint is reached, the procedure terminates. The union of the final set of equivalence classes coincides with the set of states satisfying $Kh_G\psi$.

\medskip

\begin{lemma}\label{termination of EvalKh}
  $\textsc{EvalKh}$ always terminates. Moreover, for any input, there exists a sequence $W_0\subseteq W_1\subseteq \cdots \subseteq W_l$ of sets of $\sim_G$-equivalence classes such that $\bigcup W_l = \textsc{EvalKh}(\mathcal{M},G,\phi)$.
\end{lemma}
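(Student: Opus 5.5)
We argue by induction on the structure of $\phi$ that $\textsc{Eval}(\mathcal M,\phi)$ terminates. The lemma concerns the single call $\textsc{EvalKh}(\mathcal M,G,\phi)$, so we may assume the recursive call $X\gets\textsc{Eval}(\mathcal M,\phi)$ terminates and returns a subset of $S$. This is immediate for the Boolean cases and for $\textsc{EvalK}$: they only perform finitely many set operations on the finite set $S$ and recurse on strictly smaller subformulas. The remaining steps before the loop are also finite: computing $\sim_G$ as a finite intersection, forming the finite quotient $\mathcal C_G=S/{\sim_G}$, and initialising $W$.

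For the \textbf{while}-loop, we write $W_0$ for the initial value $\{E\in\mathcal C_G\mid E\subseteq X\}$ and $W_j$ for the value of $W$ at the end of the $j$-th iteration. Each iteration runs a \textbf{foreach} over the finite set $\mathcal C_G\setminus W$. Each body execution calls $\textsc{HasWinDistribution}$, which terminates by Lemma~\ref{termination of hasWinDistribution}, viewed as a decision oracle returning a definite answer. The body then possibly performs $W\gets W\cup\{E\}$. Hence every iteration terminates. Since $W$ is only ever enlarged by unions, we get $W_{j}\subseteq W_{j+1}$, and all $W_j\subseteq\mathcal C_G$.

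The key step is bounding the number of iterations. If iteration $j+1$ leaves $changed$ equal to True, then some class $E\notin W_j$ was added, so $W_j\subsetneq W_{j+1}$. Any run of such strictly increasing subsets of the finite set $\mathcal C_G$ has length at most $|\mathcal C_G|\le|S|$. Therefore after at most $|\mathcal C_G|+1$ iterations an iteration occurs in which no class is added, $changed$ stays False, and the loop exits. Letting $l$ be the index of the last iteration gives the chain $W_0\subseteq W_1\subseteq\cdots\subseteq W_l$. The procedure then returns $\bigcup_{E\in W_l}E=\bigcup W_l$, which is the required equality.

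The only subtlety, and the main obstacle, is the interpretation of the nondeterministic call inside a deterministic loop. It has to be read as an $\mathsf{NP}$-oracle query: the answer is True iff some guess succeeds. Under that reading the call has a well-defined Boolean answer, and termination of every branch is exactly Lemma~\ref{termination of hasWinDistribution}. The guess space is finite, since $m\le|G|$, $H_i\subseteq G$ and $a_i\in A_G^+$ range over finite sets, so the query is decidable. The rest of the argument is routine monotonicity on a finite lattice.
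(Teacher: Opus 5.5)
Your proof is correct and follows the same route as the paper: $S/{\sim_G}$ is finite, $W$ only grows, and each successful iteration adds a new class. This gives at most $|S/{\sim_G}|$ successful iterations followed by one unsuccessful one, and the return value is $\bigcup W_l$. Your added care about termination of the recursive call $\textsc{Eval}(\mathcal M,\phi)$ by structural induction, and about reading $\textsc{HasWinDistribution}$ as an oracle query, fills in details the paper leaves implicit without changing the argument.
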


\begin{proof}
Let $\mathcal{C}_G=S/{\sim_G}$. Since $S$ is finite, $\mathcal{C}_G$ is finite as well. Each successful execution of the while-loop adds at least one new equivalence class to $W$, and no class is ever removed. Since $\mathcal{C}_G$ is finite, there can be at most $|\mathcal{C}_G|$ successful iterations. One final unsuccessful iteration sets $changed$ to $\textnormal{False}$, so the while-loop terminates.

Let $W_0=\{E\in \mathcal{C}_G \mid E\subseteq \textsc{Eval}(\mathcal{M},\phi)\}$ and, for each $i\ge 0$, let $W_{i+1}$ be the set obtained from $W_i$ after one execution of the while-loop in $\textsc{EvalKh}$. By construction, each $W_{i+1}$ is obtained from $W_i$ by adding some equivalence classes, so the sequence is increasing. When the algorithm terminates, it returns the union of all equivalence classes in the final set $W_l$. Therefore, $\textsc{EvalKh}(\mathcal{M},G,\phi)=\bigcup W_l.$
\end{proof}

\medskip

The next lemma shows the correctness of algorithm~\ref{isValidAction}.

\medskip

\begin{lemma}\label{hasValidDAction Correctness}
$\textsc{IsValidDAction}(\mathcal{M},G,\langle (H_1,a_1),\dots,(H_m,a_m)\rangle)$ returns True iff $\{H_1,\dots,H_m\}$ is a partial partition of $G$ and $a_i\in A_{H_i}$ for every $i\in[1,m]$. Consequently, if $m>1$, there is a permutation $\pi$, such that $\langle a_{\pi(1)},\dots,a_{\pi(m)}\rangle\in A_G^*$.
\end{lemma}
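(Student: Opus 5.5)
The biconditional comes straight from reading off the three checks in \textsc{IsValidDAction}. I will take a partial partition of $G$ to be a finite family of nonempty, pairwise disjoint subsets of $G$.

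For the direction ($\Rightarrow$), suppose the procedure returns True. Then none of the early exits fired. Since $m\ge 1$, there is at least one component. The first loop guarantees that every $H_i$ is nonempty, that $H_i\subseteq G$, and that $a_i\in A_{H_i}$. The second loop guarantees that $H_i\cap H_j=\emptyset$ whenever $i\neq j$. So $\{H_1,\dots,H_m\}$ is a family of nonempty, pairwise disjoint subsets of $G$, that is, a partial partition of $G$. Moreover, the $H_i$ are pairwise distinct as indices, because they are disjoint and nonempty, so the family really has $m$ members.

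For the direction ($\Leftarrow$), assume the family is a partial partition of $G$ with $m\ge 1$ and $a_i\in A_{H_i}$ for each $i$. Then each condition tested in the code fails, so the procedure reaches the final line and returns True. One caveat needs recording: the statement implicitly presupposes $m\ge 1$. This is enforced by the guess in \textsc{HasWinDistribution}, and in any case an empty tuple is not an action.

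For the consequence, let $m>1$. Because the $H_i$ are nonempty and pairwise disjoint, their minima $\min H_i$ are pairwise distinct. Let $\pi$ be the permutation that sorts the indices so that $\min H_{\pi(1)}<\dots<\min H_{\pi(m)}$, i.e.\ $H_{\pi(1)}\prec\dots\prec H_{\pi(m)}$. Since $m>1$, the family $\{H_{\pi(1)},\dots,H_{\pi(m)}\}$ is a non-trivial partial partition of $G$.

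Each $a_{\pi(k)}$ lies in $A_{H_{\pi(k)}}$, which is contained in $A^+_{H_{\pi(k)}}$, which in turn is contained in $A^*_{H_{\pi(k)}}$ by the first clause in the definition of $A^*$. Hence $\langle a_{\pi(1)},\dots,a_{\pi(m)}\rangle$ belongs to $A^*_{H_{\pi(1)}}\times\cdots\times A^*_{H_{\pi(m)}}$, with the groups ordered by $\prec$. By the second clause of the definition, this tuple is an element of $A^*_G$.

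The only delicate point is this last step: matching the flat tuple to the recursive definition of $A^*_G$. The ordering via $\prec$ must be exactly the one the definition requires, and the base inclusion $A_H\subseteq A_H^*$ must be used. No appeal to the flattening result of \cite{DKH2025} is needed for this direction, because a flat tuple is already a legitimate one-level element of $A^*_G$.
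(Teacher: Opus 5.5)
Your proof is correct and follows the paper's argument. Like the paper, you read off the three tests in \textsc{IsValidDAction} for both directions, then sort the blocks by $\prec$ and use $A_H\subseteq A_H^+\subseteq A_H^*$, together with the non-triviality of a family of $m>1$ blocks, to place the tuple in $A_G^*$. You also correctly flag that the ($\Leftarrow$) direction presupposes $m\ge 1$: otherwise the empty family, which is vacuously a partial partition, would be rejected by the first test. The paper's proof passes over this point silently.
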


\begin{proof}
If the procedure returns True, then every $H_i$ is a nonempty subgroup of $G$, the $H_i$'s are pairwise disjoint, and $a_i\in A_{H_i}$ for each $i$. Thus, $\{H_1,\dots,H_m\}$ is a partial partition of $G$. If $m=1$, then $a_1\in A_{H_1}\subseteq A_G^+$, so $a_1\in A_G^*$.  If $m>1$, reordering these groups according to $\prec$ gives $H_{\pi(1)}\prec\cdots\prec H_{\pi(m)}$, and hence the corresponding tuple $\langle a_{\pi(1)},\dots,a_{\pi(m)}\rangle$ is an element of $A_G^*$.

Conversely, if $\{H_1,\dots,H_m\}$ is a partial partition of $G$ and $a_i\in A_{H_i}$ for every $i$, then all tests in the procedure succeed, so it returns True.
\end{proof}

\medskip

The next lemma shows the correctness of the algorithm~\ref{HasWinDistribution}.

\medskip

\begin{lemma}\label{hasWinDistribution Correctness}
$\textsc{HasWinDistribution}(\mathcal{M},G,E,W)$ returns $true$ iff there is a distributed action $c$ such that $c$ is executable on $E$ and applying $c$ to any $x\in E$ always leads to a state whose equivalence class is in $W$.
\end{lemma}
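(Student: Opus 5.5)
We prove the two directions of the biconditional separately. Throughout, ``returns true'' for the nondeterministic procedure means that some branch of guesses returns True. The bridge between guessed tuples and genuine elements of $A_G^*$ is Lemma~\ref{hasValidDAction Correctness}, combined with the flattening result (Proposition 14 in \cite{DKH2025}). The transition relation $\xrightarrow{d}$ of a distributed action is an intersection of atomic relations. Intersection is commutative and associative, so it is invariant under permuting the components and under flattening nested tuples.

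\emph{Soundness.} Suppose some branch guesses $(m,\langle (H_1,a_1),\dots,(H_m,a_m)\rangle)$ and returns True.

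1. Then $\textsc{IsValidDAction}$ succeeded. By Lemma~\ref{hasValidDAction Correctness}, $\{H_1,\dots,H_m\}$ is a partial partition of $G$ with $a_i\in A_{H_i}$.
2. If $m=1$, take $c=a_1\in A_{H_1}\subseteq A_G^+\subseteq A_G^*$.
3. If $m>1$, take $c=\langle a_{\pi(1)},\dots,a_{\pi(m)}\rangle\in A_G^*$ for the permutation $\pi$ given by the lemma. Each $a_i\in A_{H_i}=A_{H_i}^*$ restricted to atomic actions, so this is a legitimate one-level tuple. The partition is non-trivial since $m>1$.
4. In either case $\xrightarrow{c}=\bigcap_k\xrightarrow{a_k}=\to^d$.
5. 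The two conjuncts checked in the final test then say literally that $c$ is executable on $E$ and that every $c$-successor $y$ of every $x\in E$ has $[y]_G\in W$.

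\emph{Completeness.} Suppose $c\in A_G^*$ has the stated properties. If $c$ is atomic, it lies in $A_H$ for some $H\subseteq G$, with $H\neq\emptyset$ because $A_\emptyset=\emptyset$. Guess $m=1$ and $(H,a)=(H,c)$.

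Otherwise $c$ is a nested tuple. We apply flattening: by induction on the nesting depth of $c$, there are pairwise disjoint nonempty $H_1,\dots,H_m\subseteq G$ and atomic $a_i\in A_{H_i}$ with $\xrightarrow{c}=\bigcap_i\xrightarrow{a_i}$. For the inductive step, take $c=\langle d_1,\dots,d_n\rangle$ with $d_k\in A_{G_k}^*$ and the $G_k$ disjoint. Flattening each $d_k$ yields disjoint subgroups inside $G_k$. These subgroups remain disjoint across different $k$, and $\xrightarrow{c}=\bigcap_k\xrightarrow{d_k}$.

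Disjointness of nonempty subgroups of $G$ gives $m\le|G|$, so this tuple is within the guessing range. $\textsc{IsValidDAction}$ accepts it by Lemma~\ref{hasValidDAction Correctness}. Since $\to^d=\xrightarrow{c}$, the final test succeeds by hypothesis, so this branch returns True.

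The main obstacle is the flattening step. We must check carefully that the subgroup labels survive the flattening: each atomic action has to be attributed to a genuine subgroup $H$ with $a\in A_H$, and these subgroups must stay pairwise disjoint. The attribution of $H$ to an atomic action may not be unique, since the $A_H$ need not be disjoint for incomparable $H$. This does not matter, because the nondeterministic guess includes the choice of $H_i$ and we only need one valid choice. If Proposition 14 of \cite{DKH2025} is stated exactly in this form, we cite it; otherwise we run the induction sketched above.
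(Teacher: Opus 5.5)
Your proof is correct and follows the same route as the paper: soundness via Lemma~\ref{hasValidDAction Correctness} plus reordering the guessed pairs by $\prec$, and completeness via flattening $c$ (Proposition 14 of \cite{DKH2025}) into atomic actions over a partial partition of $G$. Your inductive flattening sketch and the explicit check $m\le|G|$ fill in details the paper leaves implicit; one harmless slip is that the atomic part of $A_{H_i}^*$ is $A_{H_i}^+$, not $A_{H_i}$, but all you need there is $a_i\in A_{H_i}\subseteq A_{H_i}^*$.
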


\begin{proof}
($\Rightarrow$): Assume that $\textsc{HasWinDistribution}(\mathcal{M},G,E,W)$ returns $true$. Then there is $m\le |G|$ and a tuple $\langle (H_1,a_1),\dots,(H_m,a_m)\rangle$ such that $\textsc{IsValidDAction}(\mathcal{M},G,\langle (H_1,a_1),\dots,(H_m,a_m)\rangle)$ returns True. By lemma~\ref{hasValidDAction Correctness}, $\{H_1,\dots,H_m\}$ is a partial partition of $G$ and $a_i\in A_{H_i}$ for every $i$. If $m=1$, let $c:=a_1$. If $m>1$, then there is a permutation $\pi$ such that $H_{\pi(1)}\prec\cdots\prec H_{\pi(m)}$ and $\langle a_{\pi(1)},\dots,a_{\pi(m)}\rangle\in A_G^*$. Let $c:=\langle a_{\pi(1)},\dots,a_{\pi(m)}\rangle$. In both cases, $\xrightarrow{c}=\bigcap_{j=1}^m\xrightarrow{a_j}$. Following the procedure, we have $\forall x\in E\ \exists y\in S\ (x\xrightarrow{c} y)$ and $\forall x\in E\ \forall y\in S\ (x\xrightarrow{c} y \Rightarrow [y]_G\in W)$. The first condition says that $c$ is executable on $E$, and the second says that applying $c$ to any $x\in E$ always leads to a state whose equivalence class is in $W$.

($\Leftarrow$): Suppose that there is a distributed action $c$ such that $c$ is executable on $E$ and applying $c$ to any $x\in E$ always leads to a state whose equivalence class is in $W$. Since $c\in A_G^*$, by Proposition 14 from \cite{DKH2025} we obtain actions $a_1\in A_{G_1},\dots,a_m\in A_{G_m}$ for some partial partition $P=\{G_1,\dots,G_m\}$ of $G$, such that $\xrightarrow{c}=\bigcap_{j=1}^m \xrightarrow{a_j}$. By lemma~\ref{hasValidDAction Correctness}, $\textsc{IsValidDAction}(\mathcal{M},G,\langle (G_1,a_1),\dots,(G_m,a_m)\rangle)$ returns True. By hypothesis, $\forall x\in E\ \exists y\in S\ (x\xrightarrow{c} y)$ and $\forall x\in E\ \forall y\in S\ (x\xrightarrow{c} y \Rightarrow [y]_G\in W)$ are all satisfied. Therefore, the nondeterministic branch of $\textsc{HasWinDistribution}$ that guesses $(m,\langle (G_1,a_1),\dots,(G_m,a_m)\rangle)$ leads to the output $true$.
\end{proof}

\medskip

The following lemma shows the correctness of the evaluation procedure.

\medskip

\begin{lemma}\label{Correctness lemma}
  $x\in \textsc{Eval}(\mathcal{M},\phi)$ iff $x\in \llbracket \phi \rrbracket$.
\end{lemma}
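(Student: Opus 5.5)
We proceed by structural induction on $\phi$. The cases $\top$, $p$, $\neg\psi$ and $\psi\wedge\chi$ are immediate from the induction hypothesis and the Boolean clauses. For $K_G\psi$, $\textsc{EvalK}$ returns the union of the classes $E\in S/{\sim_G}$ with $E\subseteq\textsc{Eval}(\mathcal M,\psi)=\llbracket\psi\rrbracket$. By the induction hypothesis, this is exactly the set of $x$ such that $[x]_G\subseteq\llbracket\psi\rrbracket$, which is the semantic clause. All the work lies in the case $Kh_G\psi$. There we write $X=\llbracket\psi\rrbracket$, which equals $\textsc{Eval}(\mathcal M,\psi)$ by the induction hypothesis. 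We take the chain $W_0\subseteq\cdots\subseteq W_l$ from Lemma~\ref{termination of EvalKh} and prove both inclusions between $\bigcup W_l$ and $\llbracket Kh_G\psi\rrbracket$. One observation is used in both directions: $\llbracket Kh_G\psi\rrbracket$ is closed under $\sim_G$, because the semantic clause depends only on $[s]_G$.

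\emph{Soundness} ($\bigcup W_l\subseteq\llbracket Kh_G\psi\rrbracket$). Each class entering $W$ receives a \emph{rank}: classes in $W_0$ get rank $0$, and a class added during the loop gets rank one more than the maximum rank currently present in $W$. We then build a strategy $\sigma_G$ on the classes of positive rank. For such a class $E$, Lemma~\ref{hasWinDistribution Correctness} applied to the set $W$ at the moment $E$ was added gives a distributed action $c_E\in A_G^*$. This action is executable on $E$, and every $c_E$-successor of a state in $E$ lies in a class of strictly smaller rank. We set $\sigma_G(E)=c_E$ and leave classes of rank $0$ undefined. Along any possible execution the rank strictly decreases, so every complete execution is finite. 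Its leaf is outside $\mathrm{dom}(\sigma_G)$, so it has rank $0$, hence lies in $W_0$ and is contained in $X$. Both semantic conditions therefore hold for every $s\in\bigcup W_l$.

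\emph{Completeness} ($\llbracket Kh_G\psi\rrbracket\subseteq\bigcup W_l$). Let $s\models Kh_G\psi$ be witnessed by $\sigma_G$. Every complete execution from $[s]_G$ is finite, and the branching is finite since $S$ is finite. By König's lemma the execution tree from $[s]_G$ is therefore finite. We induct on the height $h(E)$ of the subtree rooted at each class $E$ reachable from $[s]_G$. If $E\notin\mathrm{dom}(\sigma_G)$, then $E$ is a leaf, so $E\subseteq X$ and $E\in W_0$. Otherwise $c=\sigma_G(E)$ is executable on $E$, and every successor class $E'$ of $E$ has smaller height, so $E'\in W_l$ by induction. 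The fixpoint condition then applies: in the final iteration no class outside $W_l$ passes $\textsc{HasWinDistribution}$. Since $c$ witnesses the right-hand side of Lemma~\ref{hasWinDistribution Correctness} for $E$ and $W_l$, $E$ cannot lie outside $W_l$, so $E\in W_l$. In particular $[s]_G\in W_l$.

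The main obstacle is the completeness direction. There we must make sure that "all complete executions are finite" really yields a well-founded measure. This requires König's lemma together with finiteness of $S/{\sim_G}$, and care that a cycle in the execution graph would give an infinite execution, which is excluded. We also need the termination condition of the while-loop, i.e.\ that $W_l$ is closed under the one-step rule, rather than only the monotonicity stated in Lemma~\ref{termination of EvalKh}.
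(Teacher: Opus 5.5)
Your proof is correct and follows essentially the paper's route. Both use structural induction, a strategy defined by rank in the order of addition to $W$ for soundness, and induction on the finite depth of the execution tree from $[s]_G$ for completeness.

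The differences are minor. You justify finiteness of the tree via K\"onig's lemma, where the paper argues that a reachable cycle would give an infinite execution. You also conclude $E\in W_l$ directly from the termination condition of the while-loop, where the paper proves the stage-indexed claim $E\in W_{d(E)}$. Your version cleanly sidesteps two points the paper leaves implicit: that $W_l$ is closed under the one-step rule when $d(E)>l$, and that the test is monotone in $W$ within a single iteration.
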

\begin{proof}
  Proceed by induction on the structure of $\phi$.

  \bigskip
  
  Boolean cases are immediate.

  \bigskip

  (IH1) For every proper subformula $\psi$ of $\phi$ and every state $x$, we have $x\in \textsc{Eval}(\mathcal{M},\psi)$ iff $x\in \llbracket \psi \rrbracket$. 

  \paragraph{($\phi\equiv K_G \psi$)}
  Assume $x\in \textsc{Eval}(\mathcal{M},\phi)$. Since $\phi\equiv K_G\psi$, this means that
  $x\in \textsc{EvalK}(\mathcal{M},G,\psi)$. By the definition of $\textsc{EvalK}$, we have
  $[x]_G\subseteq \textsc{Eval}(\mathcal{M},\psi)$. By IH1, $\textsc{Eval}(\mathcal{M},\psi)=\llbracket \psi \rrbracket$.
  Hence, $[x]_G\subseteq \llbracket \psi \rrbracket$. In other words, for all
  $s\in [x]_G$, we have $s\in \llbracket \psi \rrbracket$. By the semantics of
  $K_G$, we obtain $x\in \llbracket K_G\psi \rrbracket$.

  Conversely, assume $x\in \llbracket \phi \rrbracket$. Since $\phi\equiv K_G\psi$,
  we have $x\in \llbracket K_G\psi \rrbracket$. By the semantics of $K_G$, for all
  $s\in [x]_G$, we have $s\in \llbracket \psi \rrbracket$. By IH1,
  $\llbracket \psi \rrbracket = \textsc{Eval}(\mathcal{M},\psi)$. Thus, $[x]_G\subseteq \textsc{Eval}(\mathcal{M},\psi)$.
  By the definition of $\textsc{EvalK}$, this implies $x\in \textsc{EvalK}(\mathcal{M},G,\psi)$, and hence
  $x\in \textsc{Eval}(\mathcal{M},\phi)$.

  \paragraph{($\phi\equiv Kh_G \psi$)} 
  \begin{claim}
  $x\in \textsc{EvalKh}(\mathcal{M},G,\psi)$ iff there is a strategy $\sigma_G$ of $G$ that 1. for all $[t]_G\in \mathrm{CELeaf}(\sigma_G, x)$, $[t]_G\subseteq \llbracket \psi \rrbracket$, and 2. all its complete executions starting from $[x]_G$ are finite.
  \end{claim}
  \begin{proof}
  ($\Rightarrow$): Assume that $x\in \textsc{EvalKh}(\mathcal{M},G,\psi)$. By the lemma~\ref{termination of EvalKh}, $\textsc{EvalKh}$ always terminates. During the execution of $\textsc{EvalKh}$, record the order in which equivalence classes are added to $W$. Classes in the initial set $W_0$ have rank $0$. If a class $E$ is added later, define its rank to be the number of its position in this adding order. When $E$ is added, $\textsc{HasWinDistribution}(\mathcal{M},G,E,W')$ returns $true$ for the current set $W'$, and every class in $W'$ has smaller rank than $E$. By the lemma~\ref{hasWinDistribution Correctness}, there is a transition $c_E$ such that $c_E$ is executable on $E$ and applying $c_E$ to any $x\in E$ always leads to a state whose equivalence class is in $W'$.

  We define the strategy $\sigma_G$ by setting $\sigma_G(E)=c_E$ for every class $E$ added after the initial set, and leave $\sigma_G$ undefined on classes in $W_0$. Along every execution according to $\sigma_G$, the rank of the current class strictly decreases until a class in $W_0$ is reached. Hence, all complete executions are finite, and their leaf classes are contained in $W_0\subseteq \textsc{Eval}(\mathcal{M},\psi)=\llbracket\psi\rrbracket$ by IH1. Such $\sigma_G$ is the desired strategy.
  
  ($\Leftarrow$): Suppose that there exists a strategy $\sigma_G$ of $G$ such that (1) for all $[t]_G\in CELeaf(\sigma_G,x)$, we have $[t]_G\subseteq \llbracket \psi \rrbracket$, and (2) all complete executions of $\sigma_G$ starting from $[x]_G$ are finite.

  Let $\mathcal{C}_G=S/{\sim_G}$. Let $W_0\subseteq W_1\subseteq \cdots \subseteq W_l$ be the iteration sequence computed by $\textsc{EvalKh}$, where $W_0=\{E\in \mathcal{C}_G \mid E\subseteq \textsc{Eval}(\mathcal{M},\psi)\}$. For convenience, set $W_i=W_l$ for all $i>l$. We only consider those equivalence classes that are reachable from $[x]_G$ under $\sigma_G$. For every such reachable class $E$, define $d(E)$ to be the maximal length of a complete execution from $E$ to a leaf. This is well-defined and finite, since the set of reachable equivalence classes is finite and a reachable cycle would generate an infinite complete execution, contradicting the assumption that all complete executions of $\sigma_G$ starting from $[x]_G$ are finite.

  We show by induction on $d(E)$ that $E\in W_{d(E)}$ for every reachable class $E$. If $d(E)=0$, then $E$ is a leaf of a complete execution. By assumption, $E\subseteq \llbracket \psi \rrbracket$. By IH1, $\llbracket \psi \rrbracket = \textsc{Eval}(\mathcal{M},\psi)$. Hence, $E\subseteq \textsc{Eval}(\mathcal{M},\psi)$, and therefore $E\in W_0$.

  \bigskip

  (IH2) Assume that the statement holds for all reachable classes of depth at most $N$.

  \bigskip
  
  Let $E$ be a reachable class such that $d(E)=N+1$. Since $E$ is not a leaf, $\sigma_G(E)$ is defined. Let $c:=\sigma_G(E)$. Since $\sigma_G$ is a strategy, $c$ is executable on $E$. Take any $c$-successor class $E'$ of $E$. Then one step has already been taken, so every complete execution from $E'$ to a leaf has length at most $N$. Thus $d(E')\leq N$. By IH2, $E'\in W_N$. Since $E'$ was arbitrary, every $c$-successor class of $E$ belongs to $W_N$. Together with the fact that $c$ is executable on $E$, by the lemma~\ref{hasWinDistribution Correctness}, $\textsc{HasWinDistribution}(\mathcal{M},G,E,W_N)$ returns $true$. Hence, $E\in W_{N+1}$.

  Since $[x]_G$ is reachable to itself trivially, $[x]_G\in W_{d([x]_G)}$. Therefore, $x\in \bigcup W_l = \textsc{EvalKh}(\mathcal{M},G,\psi)$.
  \end{proof}
  By the claim, $x\in \textsc{EvalKh}(\mathcal{M},G,\psi)$ iff $x\in \llbracket Kh_G\psi \rrbracket$. Since $\phi\equiv Kh_G\psi$, we have $x\in \textsc{Eval}(\mathcal{M},\phi)$ iff $x\in \textsc{EvalKh}(\mathcal{M},G,\psi)$. Therefore, $x\in \textsc{Eval}(\mathcal{M},\phi)$ iff $x\in \llbracket \phi \rrbracket$.
\end{proof}

Finally, we are ready to show the correctness of the model checking procedure.

\begin{theorem}\label{Correctness}
  $\textsc{CheckerDKH}(\mathcal{M},x,\phi)$ returns True iff $\mathcal{M},x\models \phi$.
\end{theorem}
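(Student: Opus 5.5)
The theorem follows almost immediately from Lemma~\ref{Correctness lemma}, so the plan is to unfold the definition of $\textsc{CheckerDKH}$ and invoke that lemma. By Algorithm~\ref{checkDKH}, $\textsc{CheckerDKH}(\mathcal{M},x,\phi)$ returns True exactly when $x\in\textsc{Eval}(\mathcal{M},\phi)$. By Lemma~\ref{Correctness lemma}, this holds iff $x\in\llbracket\phi\rrbracket$. By the definition of the satisfaction set, that is exactly the condition $\mathcal{M},x\models\phi$. Chaining these three equivalences gives the statement.

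One point still needs to be made explicit: $\textsc{CheckerDKH}$ must actually return a value, so that ``returns True'' is well defined. I would argue termination of $\textsc{Eval}$ by induction on the structure of $\phi$. Each recursive call is on a proper subformula. The Boolean and $K_G$ cases perform finitely many set operations on the finite set $S$. The $Kh_G$ case terminates by Lemma~\ref{termination of EvalKh}. Each oracle call inside it terminates by Lemma~\ref{termination of hasWinDistribution}; it is a nondeterministic procedure with finitely many branches, so its accept/reject answer is well defined.

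There is no genuine obstacle here, since all the substantive work sits in Lemma~\ref{Correctness lemma}, in particular its $Kh_G$ claim. The only care needed is to phrase the nondeterministic subroutine correctly. ``Returns True'' for $\textsc{HasWinDistribution}$ should be read as ``some branch accepts.'' This is the reading under which Lemma~\ref{hasWinDistribution Correctness} was proved, and hence the reading under which $\textsc{EvalKh}$, and therefore $\textsc{CheckerDKH}$, is a deterministic procedure relative to an $\mathsf{NP}$ oracle.
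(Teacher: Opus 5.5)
Your proposal is correct and matches the paper's proof. The paper also just unfolds \textsc{CheckerDKH} to $x\in\textsc{Eval}(\mathcal{M},\phi)$ and applies the correctness lemma for \textsc{Eval}; your extra remarks on termination and on reading the nondeterministic subroutine as ``some branch accepts'' are not in the paper's proof, but they are consistent with its earlier lemmas and do no harm.
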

\begin{proof}
  By the definition of $\textsc{CheckerDKH}$, it returns True iff $x\in\textsc{Eval}(\mathcal M,\phi)$. By Lemma~\ref{Correctness lemma}, this is equivalent to $\mathcal M,x\models\phi$.
\end{proof}

\subsection{Complexity}

We define the $\textsc{HasWinDistribution}$ problem:

\begin{tcolorbox}[
colback=white,
colframe=black,
boxrule=0.8pt,
sharp corners,
left=8pt,
right=8pt,
top=7pt,
bottom=7pt
]
\underline{\textsc{HasWinDistribution}}

\vspace{4pt}
\textbf{Input:} a finite DKH model $\mathcal{M}$, a group $G\subseteq I$, an equivalence class $E\in S/{\sim_G}$, and a set $W\subseteq S/{\sim_G}$.

\textbf{Question:} does there exist a distributed action $c$ such that $c$ is executable on $E$, and for every $x\in E$ and every $y\in S$, if $x\xrightarrow{c}y$, then $[y]_G\in W$?
\end{tcolorbox}

The DKH model checking problem is defined as follows:
\begin{tcolorbox}[
colback=white,
colframe=black,
boxrule=0.8pt,
sharp corners,
left=8pt,
right=8pt,
top=7pt,
bottom=7pt
]
\underline{\textsc{CheckDKH}}

\vspace{4pt}
\textbf{Input:} a finite DKH model $\mathcal{M}$, a state $x\in S$, and a DKH formula $\phi$.

\textbf{Question:} does $\mathcal{M},x\models \phi$ hold?
\end{tcolorbox}

Take $(m,\langle (H_1,a_1),\dots,(H_m,a_m)\rangle)$ as a certificate, where $1\le m\le |G|$, $H_i\subseteq G$, and $a_i\in A_G^+$. We can show that

\begin{lemma}\label{HasWinDistribution in NP}
The problem \textsc{HasWinDistribution} is in $\mathsf{NP}$.
\end{lemma}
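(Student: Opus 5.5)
We show membership in $\mathsf{NP}$ in the standard way: the certificate $(m,\langle (H_1,a_1),\dots,(H_m,a_m)\rangle)$ fixed just before the statement has polynomial size, and a deterministic polynomial-time verifier accepts some certificate exactly on the yes-instances. Correctness of this characterisation is Lemma~\ref{hasWinDistribution Correctness} (with Lemma~\ref{hasValidDAction Correctness}). The only work is bounding certificate size and verification time.

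\emph{Size.} We have $m\le |G|\le |I|$. Each $H_i\subseteq G$ is a bit-vector of length $|I|$. Each $a_i\in A_G^+$ can be named by an index into the explicitly listed actions. Under the explicit representation assumed in the preliminaries, where only non-empty $A_G$'s are listed, the number of actions and the number of listed groups are both bounded by the input size. The certificate therefore has size $O(|I|\cdot(|I|+\log|\mathcal{M}|))$, which is polynomial. This is the step to state carefully: guessing $H_i$ as an arbitrary subset of $G$ is harmless because $|I|$ bits are polynomial in the input. We never enumerate all subsets, and we never construct $A_G^*$, which may be exponentially large.

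\emph{Verification.} First run \textsc{IsValidDAction}. It performs $m$ membership checks $a_i\in A_{H_i}$, each a lookup in the explicit list (if $H_i$ is not listed, $A_{H_i}=\emptyset$ and the test fails). It then performs $O(m^2)$ disjointness tests on sets of size at most $|I|$. All of this is polynomial. Next compute $\to^d=\bigcap_{k=1}^m\to^{a_k}$: this is $m$ intersections of relations on $S$, costing $O(m|S|^2)$. Finally check the two conditions on $E$:
\begin{itemize}
\item executability: every $x\in E$ has some $d$-successor;
\item closure: for every pair with $x\in E$ and $x\to^d y$, the class $[y]_G$ lies in $W$.
\end{itemize}
Computing $\sim_G$ and the class of $y$ takes polynomial time, as does membership of $[y]_G$ in $W$, since $W$ is given explicitly as a list of classes. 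This gives $O(|S|^2)$ checks, each polynomial.

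By Lemma~\ref{hasWinDistribution Correctness}, some certificate is accepted iff a suitable distributed action $c$ exists. The backward direction uses the flattening via Proposition~14 of \cite{DKH2025} to obtain the certificate $\langle (G_1,a_1),\dots,(G_m,a_m)\rangle$. Note that $m\le|G|$ holds automatically, since the $G_j$ are pairwise disjoint and nonempty. Hence \textsc{HasWinDistribution} is in $\mathsf{NP}$.
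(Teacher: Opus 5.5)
Your proposal is correct and is essentially the paper's argument: the paper fixes the same certificate $(m,\langle (H_1,a_1),\dots,(H_m,a_m)\rangle)$ just before the lemma and leaves the polynomial-size bound and the polynomial-time verification implicit, with correctness coming from Lemma~\ref{hasWinDistribution Correctness}. Your write-up supplies those omitted details, including the observation that $m\le|G|$ holds automatically for the flattened witness obtained via Proposition~14 of \cite{DKH2025}.
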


Finally, we obtain the upper bound for the problem.

\begin{theorem}\label{upper_bound}
The problem \textsc{CheckDKH} is in $\Delta_2^p$.
\end{theorem}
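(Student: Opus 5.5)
The plan is to show that \textsc{CheckerDKH} runs in deterministic polynomial time when every call to \textsc{HasWinDistribution} is answered by an $\mathsf{NP}$ oracle. Correctness is already given by Theorem~\ref{Correctness}, and Lemma~\ref{HasWinDistribution in NP} tells us that each \textsc{HasWinDistribution} instance lies in $\mathsf{NP}$. So what remains is to count the oracle calls and bound the deterministic work between them, measured in the input size $|\mathcal M|+|\phi|$.

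First, \textsc{Eval} is called once for each subformula occurrence of $\phi$, which gives at most $|\phi|$ calls. I will memoise, or simply note that the recursion tree is the parse tree of $\phi$. Each Boolean case, and the atomic and $\top$ cases, takes time polynomial in $|S|$.

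Second, for each $K_G$ or $Kh_G$ occurrence I will compute $\sim_G$ and the quotient $S/{\sim_G}$. Intersecting $|G|\le|I|$ equivalence relations, each given explicitly as part of $\mathcal M$, takes time polynomial in $|S|$ and $|I|$. The test $E\subseteq X$ in \textsc{EvalK} and in the initialisation of $W$ is also polynomial.

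Third, for \textsc{EvalKh} I will bound the loops, following the proof of Lemma~\ref{termination of EvalKh}. The while-loop executes at most $|\mathcal C_G|+1\le |S|+1$ times. Each pass iterates over at most $|S|$ classes and makes one oracle query per class. This gives $O(|S|^2)$ queries for each $Kh$ occurrence, and hence $O(|\phi|\cdot|S|^2)$ queries overall.

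Each query instance $(\mathcal M,G,E,W)$ is written down in polynomial size. Its certificate $(m,\langle (H_1,a_1),\dots,(H_m,a_m)\rangle)$ has $m\le|G|$ entries, each consisting of a subset of $I$ and an action listed in the explicit representation of $\{A_G\}$. Verifying it means running \textsc{IsValidDAction}, which is polynomial because the $A_H$'s are explicitly listed. It also means computing $\bigcap_k\to^{a_k}$ and checking the two conditions, which takes $O(|G|\cdot|S|^2)$ time. This is the content of Lemma~\ref{HasWinDistribution in NP}, so each call is a legitimate $\mathsf{NP}$ oracle call.

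The main subtlety is to make sure that nothing exponential sneaks in. Specifically, the set $A_G^*$ is never constructed, and only the flattened witnesses (via Proposition 14 of \cite{DKH2025}) are guessed. It must also hold that the oracle answer is used exactly as Lemma~\ref{hasWinDistribution Correctness} states. I expect this to be the only step needing care, namely arguing that the certificate length is bounded by $|I|$ times a polynomial in the model representation. Together with the polynomial number of queries and the polynomial-time glue code, this places \textsc{CheckDKH} in $\mathsf{P}^{\mathsf{NP}}=\Delta_2^p$.
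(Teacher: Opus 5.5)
Your proposal is correct and follows essentially the same route as the paper. Both arguments rest on four points: bottom-up evaluation over the linearly many subformulas; at most $|S/{\sim_G}|+1$ passes of the fixpoint loop per $Kh_G$-subformula, each making polynomially many \textsc{HasWinDistribution} queries that an $\mathsf{NP}$ oracle answers; the $\mathsf{NP}$-membership lemma for \textsc{HasWinDistribution}; and the correctness theorem for \textsc{CheckerDKH}. Your extra bookkeeping (the cost of computing $\sim_G$, the $O(|\phi|\cdot|S|^2)$ query bound, and the certificate-size bound) spells out details the paper leaves implicit.
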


\begin{proof}
The procedure $\textsc{Eval}$ evaluates formulas bottom-up. Boolean cases and the $K_G$-case are handled by polynomial-time set operations over $S$ and the equivalence classes of $\sim_G$.

For a subformula $Kh_G\psi$, after computing $\textsc{Eval}(\mathcal M,\psi)$, the algorithm $\textsc{EvalKh}$ performs a monotone fixpoint computation over $S/{\sim_G}$. We can add at most $|S/{\sim_G}|\le |S|$ equivalence classes, and each iteration checks only polynomially many classes. Each such check is a call to \textsc{HasWinDistribution}, which is in $\mathsf{NP}$ by Lemma~\ref{HasWinDistribution in NP}. Thus each $Kh_G$-subformula is evaluated in deterministic polynomial time with an $\mathsf{NP}$ oracle.

Since the number of subformulas of $\phi$ is linear in $|\phi|$, the whole bottom-up evaluation runs in $P^{NP}$ (also called $\Delta_2^p$). By Theorem~\ref{Correctness}, $\textsc{CheckerDKH}(\mathcal M,x,\phi)$ returns true iff $\mathcal M,x\models\phi$. Therefore, \textsc{CheckDKH} is in $\Delta_2^p$.
\end{proof}

\section{The Lower Bound}
We show that the model checking problem for the logic of distributed knowing how is $\Delta_2^p$-hard by a reduction from a $\Delta_2^p$-complete problem SNSAT \cite{SNSAT}.

An instance $\mathcal I$ of SNSAT is given by a set $X=\{x_1,\ldots,x_n\}$ of boolean variables together with a list $\mathcal L$ of equivalences
\[
\begin{array}{rcl}
x_1 &\Leftrightarrow& \exists Z_1\ F_1(Z_1),\\
x_2 &\Leftrightarrow& \exists Z_2\ F_2(x_1,Z_2),\\
    &\vdots& \\
x_n &\Leftrightarrow& \exists Z_n\ F_n(x_1,\ldots,x_{n-1},Z_n),
\end{array}
\]
where, for $i=1,\ldots,n$, $Z_i=\{z_1^i,\ldots,z_{p_i}^i\}$ is a set of boolean variables, and $F_i$ is a boolean formula with variables among $Z_i\cup\{x_1,\ldots,x_{i-1}\}$. Note that in $\mathcal I$ the sets $X,Z_1,\ldots,$ and $Z_n$ are pairwise
disjoint. We write $Z=\bigcup_{i=1}^n Z_i$ and $Var=X\cup Z$.

The equivalences $\mathcal L$ in $\mathcal I$ define a unique valuation
$v_{\mathcal I}$ of the variables in $X$:
\[
  v_{\mathcal I}(x_i)=\top
  \overset{\mathrm{def}}{\Longleftrightarrow}
  F_i(v_{\mathcal I}(x_1),\ldots,v_{\mathcal I}(x_{i-1}),Z_i)
  \text{ is satisfiable.}
\]
Observe that there exists a simple algorithm in $\Delta^p_2$ that computes $v_{\mathcal I}$ one value at a time. When $v_{\mathcal I}$ is known over $\{x_1,\ldots,x_{i-1}\}$, the value of $v_{\mathcal I}(x_i)$ is computed by solving a boolean satisfiability problem, ``is $F_i$ satisfiable with the given values of $x_1,\ldots,x_{i-1}$?'', for which a SAT oracle is sufficient.

\begin{tcolorbox}[
colback=white,
colframe=black,
boxrule=0.8pt,
sharp corners,
left=8pt,
right=8pt,
top=7pt,
bottom=7pt
]
\underline{\textsc{SNSAT}}

\vspace{4pt}
\textbf{Input:} an instance $\mathcal I$ as above.

\textbf{Question:} decide whether $v_{\mathcal I}(x_n)=\top$?
\end{tcolorbox}
An \textsc{SNSAT} instance $\mathcal I$ is positive if $v_{\mathcal I}(x_n)=\top$.

\subsection{The reduction}

In the following reduction, we do not specify the order of the components in a distributed action. Since this order has no semantic effect, every tuple is understood to be rearranged according to the fixed order $\prec$ when necessary.

Keep the notions above. Let $\mathcal I$ be an instance of $\textsc{SNSAT}$. W.L.O.G., suppose each $F_i$ is in CNF. Write $Z_i=\{z_1^i,\ldots,z_{p_i}^i\}$ and $F_i=C^i_1\wedge\cdots\wedge C^i_{m_i}$. We also assume that every previous variable $x_j$ with $j<i$ occurs in $F_i$. If some previous variable $x_j$ does not occur in $F_i$, we conjoin the tautological clause $(x_j\lor\neg x_j)$ to $F_i$. We also assume that every $F_i$ contains at least one variable. If $F_i$ contains no variables, introduce a fresh local variable $z_i^\ast$, add it to $Z_i$, and conjoin the tautological clause $(z_i^\ast\lor\neg z_i^\ast)$ to $F_i$. We build a reduction from it to an instance of $\textsc{CheckDKH}$ as follows.

\paragraph{Model}
Let $\mathcal{M}=\left\langle S,\{\sim_a\}_{a\in I},\{A_G\}_{G\subseteq I},\left\{\xrightarrow{a}\mid a\in \bigcup_{G\subseteq I}A_G\right\},V\right\rangle$.
\begin{itemize}
  \item The set of states is defined as follows. First, create a success state $s^{\top}$. For each layer $i\in[1,n]$, create an entry state $s^i$. For each clause $C^i_r$ of $F_i$, create a bad state $b_r^i$. Finally, for every  layer $k$ and every previous variable $x_j$ with $1\le j<k\le n$, create a negative-check state $n^k_{x_j}$. Thus, $S=\{s^i:1\le i\le n\}\cup\{s^\top\}\cup\{b^i_r:1\le i\le n,\ 1\le r\le m_i\}\cup\{n^k_{x_j}:1\le j<k\le n\}$.

  \item Now we define the set of agents $I$. For every  layer $k$, the group $G_k$ contains $k$-indexed copies of all agents needed to evaluate layers $i\le k$. More precisely, for every layer $i\le k$ and every local variable $z_m^i\in Z_i$, we introduce an agent $a^k_{z_m^i}$. For every layer $i\le k$ and every previous variable occurrence $x_j$ in $F_i$, with $1\le j<i$, we introduce an agent $a^k_{x_j^i}$. Hence, $G_k=\{a^k_{z_m^i}:1\le i\le k,\ 1\le m\le p_i\}\cup\{a^k_{x_j^i}:1\le j<i\le k\}$, and the full set of agents is $I=\bigcup_{k=1}^n G_k$. Agents with different  superscripts are distinct.

  \item All epistemic relations are identity relations: $s\sim_a t\iff s=t$. So the equivalence classes are singleton sets.

  \item For every layer $k$, every layer $i\le k$, and every local variable $z_m^i\in Z_i$, set $A_{\{a^k_{z_m^i}\}}=\{t^k_{z_m^i},f^k_{z_m^i}\}$. For every layer $k$, every layer $i\le k$, and every previous variable occurrence $x_j$ in $F_i$, with $1\le j<i$, set $A_{\{a^k_{x_j^i}\}}=\{t^k_{x_j^i},f^k_{x_j^i}\}$. For every non-singleton group $H$, set $A_H=\varnothing$. 
  
  \item We define the transition relation from entry states layer by layer. For a state $s$ and an atomic action $a$, write $Img(s,a):=\{t\in S\mid s\xrightarrow{a}t\}$. For any $i\le k$, and for every action not specified below, the image set is empty. For each local variable $z_m^i\in Z_i$, define $Img(s^i,t^k_{z_m^i})=\{s^\top\}\cup\{b^i_r:z_m^i\text{ does not occur in }C^i_r\}\cup\{s^j,n^k_{x_j}:1\le j<i\}$ and $Img(s^i,f^k_{z_m^i})=\{s^\top\}\cup\{b^i_r:\neg z_m^i\text{ does not occur in }C^i_r\}\cup\{s^j,n^k_{x_j}:1\le j<i\}$. For each previous variable occurrence $x_j^i$ with $1\le j<i$, define $Img(s^i,t^k_{x_j^i})=\{s^\top,s^j\}\cup\{b^i_r:x_j^i\text{ does not occur in }C^i_r\}\cup\{s^h,n^k_{x_h}:1\le h<i,\ h\ne j\}$ and $Img(s^i,f^k_{x_j^i})=\{s^\top,n^k_{x_j}\}\cup\{b^i_r:\neg x_j^i\text{ does not occur in }C^i_r\}\cup\{s^h,n^k_{x_h}:1\le h<i,\ h\ne j\}$.

  \item For every negative-check state $n^k_{x_j}$, the state is a leaf for the current  group $G_k$: for every atomic action $a$ of an agent in $G_k$, set $Img(n^k_{x_j},a)=\varnothing$. For every lower  layer $\ell<k$ and every atomic action $a$ of an agent in $G_\ell$, set $Img(n^k_{x_j},a)=Img(s^j,a)$. Finally, $s^\top$ and all bad states are leaves, so $Img(s^\top,a)=\varnothing$ and $Img(b^i_r,a)=\varnothing$ for every atomic action $a$.

  \item We use proposition letters $q,r_1,\ldots,r_n$. The proposition $q$ marks the success state, so $V(q)=\{s^\top\}$. For each  layer $k$, the proposition $r_k$ marks the negative-check states belonging to  $k$, so $V(r_k)=\{n^k_{x_j}:1\le j<k\}$. All other proposition letters are false everywhere.
\end{itemize}

\paragraph{Formula}
Define $\Theta_1:=q$ and $\Psi_1:=Kh_{G_1}q$. For $k>1$, define $\Theta_k:=q\vee(r_k\wedge\neg\Psi_{k-1})$ and $\Psi_k:=Kh_{G_k}\Theta_k$.

\paragraph{The CheckDKH Instance}
$\langle \mathcal{M}, s^n,\Psi_n\rangle$

\begin{figure}
\centering

\begin{minipage}[c]{0.34\textwidth}
\centering
\[
x_i \Leftrightarrow \exists Z_i\,
F_i(x_1,\ldots,x_{i-1},Z_i)
\]
\end{minipage}
\begin{minipage}[c]{0.08\textwidth}
\centering
\[
\Longrightarrow
\]
\end{minipage}
\begin{minipage}[c]{0.54\textwidth}
\centering
\begin{tikzpicture}[
  >=stealth,
  thick,
  node distance=1.6cm,
  state/.style={circle,draw,minimum size=8mm,inner sep=1pt},
  bad/.style={circle,draw,minimum size=8mm,inner sep=1pt},
  every node/.style={font=\small}
]

\node[state] (si) {$s^i$};

\node[state, above right=1.0cm and 2.4cm of si] (top) {$s^\top$};
\node[bad, right=2.8cm of si] (bad) {$b_r^i$};
\node[state, below right=1.0cm and 2.4cm of si] (sj) {$s^j$};
\node[state, below=1.3cm of sj] (neg) {$n^k_{x_j}$};

\draw[->] (si) -- (top);
\draw[->] (si) -- (bad);
\draw[->] (si) -- (sj);
\draw[->] (si) -- (neg);

\draw[->,dashed] (neg) -- (sj);

\node[above=0.15cm of top] {$q$};
\node[right=0.15cm of neg] {$r_k$};

\end{tikzpicture}
\end{minipage}

\caption{A schematic view of one layer of the reduction.}
\label{fig:reduction-layer}
\end{figure}

\bigskip

It is easy to see that the construction is polynomial in $|\mathcal I|$.

\subsection{Examples}
Consider the following $\textsc{SNSAT}$ instance, where $x_1 \Leftrightarrow \exists z_1^1\, z_1^1$, and $x_2 \Leftrightarrow \exists z_1^2\,(x_1\wedge z_1^2)$. Then the reduction is as Figure~\ref{fig:reduction1}. Note that we omit the transitions from $n_{x_1}^2$ since it simply simulates $s^1$. It is easy to see that $(x_1\wedge z_1^2)$ is satisfiable by setting $z_1^2=\top$, given that $v_{\mathcal I}(x_1)=\top$. In the output of the reduction, we can take the strategy such that
\[
\sigma([s^2]_{G_2})=\langle t^2_{x_1^2},t^2_{z_1^2}\rangle
\quad\text{and}\quad
\sigma([s^1]_{G_2})= t^2_{z_1^1}.
\]
Then the possible successors from $[s^2]_{G_2}$ are either $[s^\top]_{G_2}$ or $[s^1]_{G_2}$, and from $[s^1]_{G_2}$ the only possible successor is $[s^\top]_{G_2}$. Hence, every complete execution is finite and ends in a state satisfying $q$. Therefore, $\mathcal M,s^2\models Kh_{G_2}\bigl(q\vee(r_2\wedge\neg Kh_{G_1}q)\bigr)$.

\begin{figure}[h]
\centering
\includegraphics[width=0.9\textwidth]{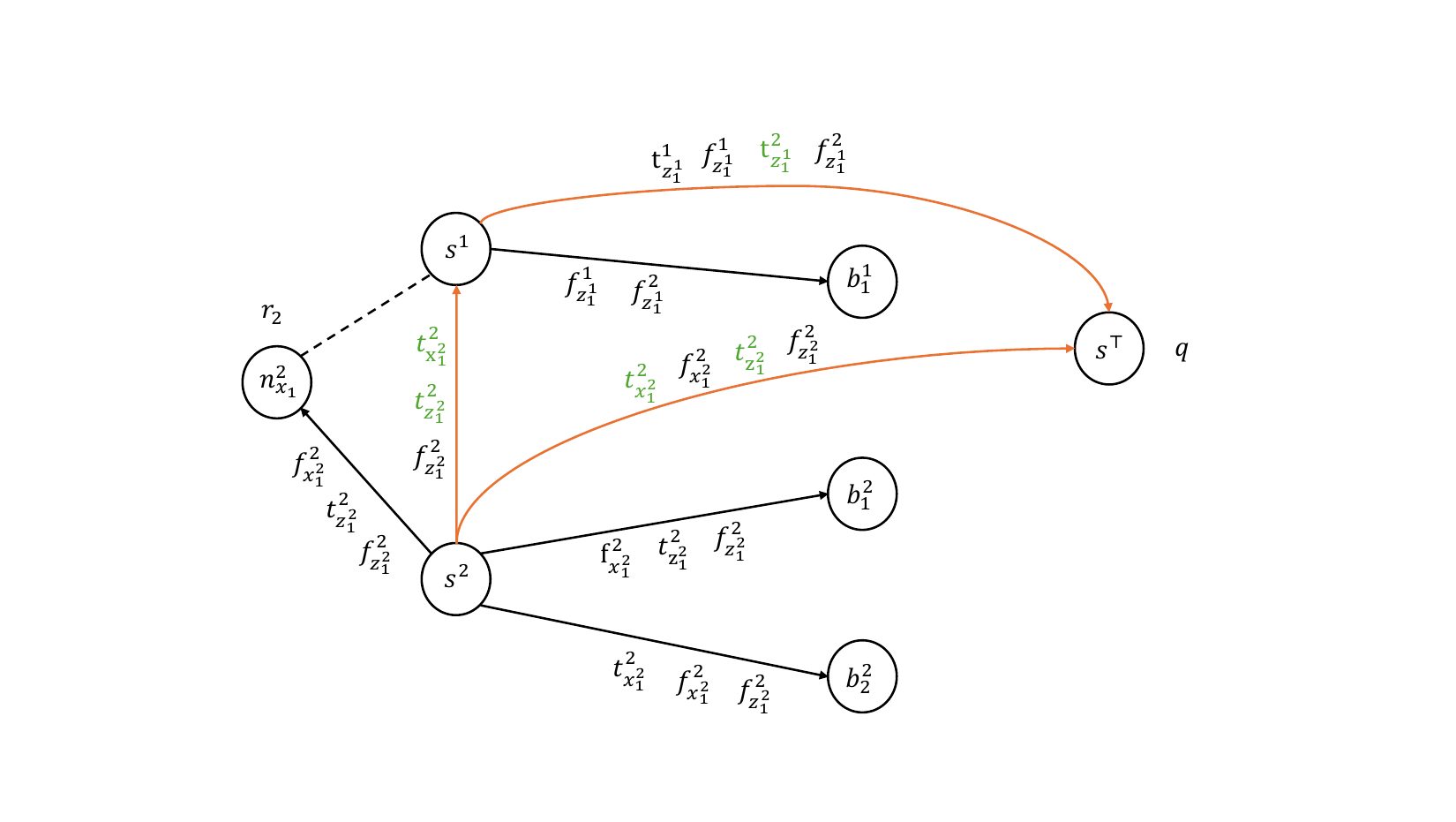}
\caption{The first reduction.}
\label{fig:reduction1}
\end{figure}

Now we modify the above $\textsc{SNSAT}$ instance by replacing $x_1$ with $\neg x_1$ in $(x_1\wedge z_1^2)$. The reduction is as Figure~\ref{fig:reduction2}. In this case, satisfying $F_2$ would require setting $x_1$ to false, but the previous variable $x_1$ has already been determined to be true. Thus $x_2$ is false. Correspondingly, every strategy for $G_2$ either reaches a bad state or reaches the negative-check state $n_{x_1}^2$, which witnesses that the false claim about $x_1$ is incompatible with the lower layer. Hence, $\mathcal M,s^2\not\models Kh_{G_2}\bigl(q\vee(r_2\wedge\neg Kh_{G_1}q)\bigr)$.

\begin{figure}[h]
\centering
\includegraphics[width=0.9\textwidth]{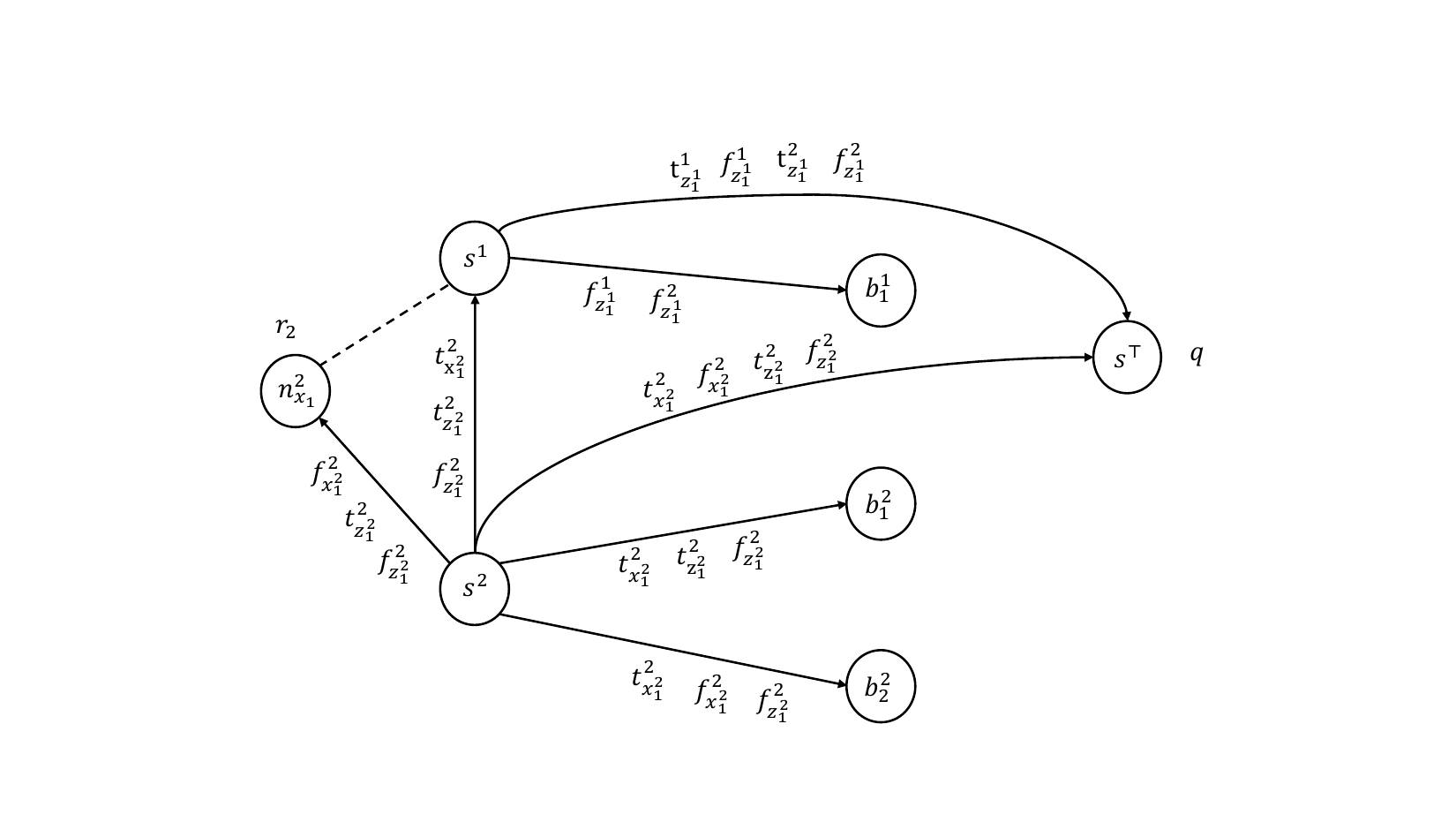}
\caption{The second reduction.}
\label{fig:reduction2}
\end{figure}

\subsection{Correctness}

Note that $d\in A_G^*$ can be nested in general, namely, $d$ could be of the form $\langle \langle a, b\rangle,c\rangle$. But in our case, things can be much simpler. The following lemma shows that every distributed action following the reduction should be flattened.

\medskip

\begin{lemma}\label{flatten}

Let $k$ be any layer. Let $d\in A^*_{G_k}$ be a distributed action that uses all and only singleton actions. Then there are distinct agents $a_1,\ldots,a_m\in G_k$ and actions $d_i\in A_{{a_i}}$ such that, if $m=1$, then $d=d_1$, and if $m>1$, then $d$ is of the form $\langle d_1,\ldots,d_m\rangle$.

\end{lemma}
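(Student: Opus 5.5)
We argue by strong induction on the structure of $d$, following the recursive definition of $A^*_{G}$. In the model, $A_H=\varnothing$ for every non-singleton $H$, and every nonempty $A_H$ is of the form $A_{\{a\}}$ for some agent $a$. Hence $A^+_G=\bigcup_{a\in G}A_{\{a\}}$ for every $G\subseteq G_k$. We prove the following slightly stronger statement for every nonempty $G\subseteq G_k$. Every $d\in A^*_G$ can be written as $d_1$ or as a flat tuple $\langle d_1,\dots,d_m\rangle$ with $m>1$, where $d_i\in A_{\{a_i\}}$ for pairwise distinct agents $a_i\in G$, and $\xrightarrow{d}=\bigcap_{i}\xrightarrow{d_i}$. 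The flat tuple is ordered by $\prec$ on the singletons $\{a_i\}$, as the convention of this section allows. Here ``written as'' is understood up to the identification of a nested tuple with its flattening. This is the only sense in which the lemma can hold, because literally nested tuples such as $\langle\langle a,b\rangle,c\rangle$ do belong to $A^*_{G_k}$.

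\textbf{Base case.} Suppose $d\in A^+_G$. Then $d\in A_{\{a\}}$ for a unique agent $a\in G$. Uniqueness holds because different singleton groups receive disjoint action names in the construction. Take $m=1$ and $d_1=d$.

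\textbf{Inductive step.} Suppose $d=\langle e_1,\dots,e_\ell\rangle$ with $e_j\in A^*_{G_j}$, where $\{G_1,\dots,G_\ell\}$ is a non-trivial partial partition of $G$ and $G_1\prec\dots\prec G_\ell$. Each $G_j$ is nonempty, so the induction hypothesis applies to every $e_j$. It yields distinct agents in $G_j$ with singleton actions whose transition relations intersect to $\xrightarrow{e_j}$. Concatenate these lists over $j$. The resulting agents are pairwise distinct, since the $G_j$ are pairwise disjoint, and they all lie in $G$. The transition relation is also preserved:
\[
\xrightarrow{d}=\bigcap_j\xrightarrow{e_j}=\bigcap_j\bigcap_i\xrightarrow{d_{j,i}}.
\]
Non-triviality gives $\ell\ge 2$, so the concatenated list has $m\ge 2$ entries, and we obtain a flat tuple. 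Reordering it by $\prec$ on singletons does not change the intersection of the transition relations. So the rearranged tuple is a legitimate element of $A^*_G$: each singleton $\{a_i\}$ is a block of a partial partition of $G$. Taking $G=G_k$ gives the lemma.

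\textbf{Main obstacle.} The delicate point is the meaning of ``$d$ is of the form $\langle d_1,\dots,d_m\rangle$''. One must make explicit that nested tuples are identified with their flattening and that nothing semantically relevant is lost: executability and successor sets depend only on $\xrightarrow{d}$. One also has to check that the flattened tuple is itself an element of $A^*_{G_k}$. This parallels Proposition 14 of \cite{DKH2025} and Lemma~\ref{hasValidDAction Correctness}: the singletons $\{a_1\},\dots,\{a_m\}$ form a partial partition of $G_k$, and they are non-trivial when $m>1$.
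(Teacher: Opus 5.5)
Your proof is correct and is essentially the paper's argument. The paper cites Proposition~14 of \cite{DKH2025}, which flattens any distributed action into atomic actions over a partial partition with the same transition relation, and combines it with the emptiness of all non-singleton $A_H$; your structural induction just proves that flattening directly in this model. Your caveat that ``of the form $\langle d_1,\ldots,d_m\rangle$'' must be read up to flattening is accurate, since nested tuples genuinely belong to $A^*_{G_k}$, and it matches how the paper uses the lemma.
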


\begin{proof}
  By Proposition~14 in \cite{DKH2025} together with the fact that all non-singleton groups have empty atomic action set.
\end{proof}

\medskip

The following lemma shows that every assignment for $F_i$ corresponds to a distributed action.

\medskip

\begin{lemma}\label{corresponding}
Let $k$ be any layer, and let $a_1,\ldots,a_m$ be exactly the agents in $G_k$ corresponding to the variables occurring in $F_k$. For any distributed action
\[
d=
\begin{cases}
d_1\in A_{\{a_1\}}, & \text{if } m=1,\\
\langle d_1,\ldots,d_m\rangle\in
A_{\{a_1\}}\times\cdots\times A_{\{a_m\}}\subseteq A^*_{G_k},
& \text{if } m>1.
\end{cases}
\]
$d$ uniquely determines a truth assignment $\alpha_k^d$ for $F_k$: if $d_\ell$ is the true action of $a_\ell$, then the corresponding variable is assigned $\top$; if $d_\ell$ is the false action of $a_\ell$, then the corresponding variable is assigned $\bot$. Conversely, every truth assignment for $F_k$ is induced by such a distributed action $d$. The statement holds for any $i\le k$.
\end{lemma}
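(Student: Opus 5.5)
The plan is to exhibit an explicit bijection between the tuples in $A_{\{a_1\}}\times\cdots\times A_{\{a_m\}}$ (or $A_{\{a_1\}}$ when $m=1$) and the truth assignments to the variables of $F_k$. The first step fixes the correspondence between agents and variables. By construction of $I$, the agents in $G_k$ that correspond to variables of $F_k$ are exactly the agents $a^k_{z_m^k}$ for $z_m^k\in Z_k$ and $a^k_{x_j^k}$ for $1\le j<k$. Here we use the standing assumption that every previous variable $x_j$ occurs in $F_k$, so each such agent really corresponds to an occurring variable and conversely. We also use the assumption that $F_k$ contains at least one variable, which guarantees $m\ge 1$. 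Agents with distinct indices are distinct, so this correspondence is a bijection between $\{a_1,\ldots,a_m\}$ and the variable set of $F_k$.

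Next, each $A_{\{a_\ell\}}$ is by definition the two-element set consisting of a "true" action $t^k_{\cdot}$ and a "false" action $f^k_{\cdot}$. These are distinct action names, and the atomic action sets of distinct singleton groups are disjoint. Hence each component $d_\ell$ lies in exactly one $A_{\{a_\ell\}}$ and is unambiguously either the true or the false action of $a_\ell$. This makes $\alpha^d_k$ well defined: it assigns $\top$ or $\bot$ to the variable of $a_\ell$ according to $d_\ell$. To see that $d$ determines this assignment uniquely, note that the components are listed in the fixed $\prec$ order. Thus a tuple in the product is determined by its components, and each component determines the value of exactly one variable.

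For the converse, take any truth assignment $\beta$ for $F_k$. For each $\ell$, choose $d_\ell$ to be the true or false action of $a_\ell$ according to $\beta$, and form $d=d_1$ if $m=1$ and $d=\langle d_1,\ldots,d_m\rangle$ otherwise. It remains to check that $d\in A^*_{G_k}$. Since $\{\{a_1\},\ldots,\{a_m\}\}$ is a partial partition of $G_k$, this follows from Lemma~\ref{hasValidDAction Correctness} after reordering by $\prec$ (or directly from the definition of $A^*_{G_k}$, which for $m>1$ is a non-trivial partial partition; for $m=1$ we have $d_1\in A_{\{a_1\}}\subseteq A^+_{G_k}$). By construction, $\alpha^d_k=\beta$.

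Finally, the claim for any $i\le k$ is proved by the same argument. One replaces $F_k$ with $F_i$ and uses the agents $a^k_{z_m^i}$ and $a^k_{x_j^i}$, which belong to $G_k$ because $G_k$ contains $k$-indexed copies of all agents for layers $i\le k$. The only point that needs care is that the single-component case $m=1$ produces an atomic action rather than a tuple; this is handled by the case split in the statement. Lemma~\ref{flatten} ensures that no nested distributed action built from these singleton actions gives anything beyond these flat forms.
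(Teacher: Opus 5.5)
Your proposal is correct and takes the same route as the paper, which simply says the lemma is immediate from the definitions of $G_k$ and the singleton action sets. You spell out exactly those definitional facts: the agent--variable bijection, the two-element sets $\{t^k_\cdot,f^k_\cdot\}$, and membership of the flat tuple in $A^*_{G_k}$ via the non-trivial partial partition into singletons. One harmless slip: you implicitly treat every $z\in Z_k$ as occurring in $F_k$, which the construction does not guarantee (only the previous variables $x_j$ are forced to occur). Since $a_1,\ldots,a_m$ are defined as the agents of the \emph{occurring} variables, your bijection argument goes through unchanged.
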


\begin{proof}
Immediate from the definitions of $G_k$ and of the singleton action sets.
\end{proof}

\medskip

From now on, we treat Lemma~\ref{flatten} and Lemma~\ref{corresponding} as facts. The next lemma states the core mechanism of our reduction. If one picks an assignment that does not satisfy $F_i$, the corresponding distributed transition will lead to a bad state.

\medskip

\begin{lemma}\label{bad-clause-sat}
Let $i\le k$ be layers. Let $d\in A^*_{G_k}$ be a distributed action using exactly the singleton agents corresponding to the variables occurring in $F_i$, and let $\alpha_i^d$ be the assignment to the variables occurring in $F_i$ induced by $d$. Then, for every clause $C_r^i$ of $F_i$, $[s^i]_{G_k}\overset{d}{\to}[b_r^i]_{G_k}$ iff $C_r^i$ is false under $\alpha_i^d$. Consequently, there is no bad state $b_r^i$ such that $[s^i]_{G_k}\overset{d}{\to}[b_r^i]_{G_k}$ iff $\alpha_i^d$ satisfies $F_i$.
\end{lemma}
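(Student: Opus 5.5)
Since all epistemic relations are identities, every $\sim_{G_k}$-class is a singleton, so $[s^i]_{G_k}\overset{d}{\to}[b^i_r]_{G_k}$ holds iff $s^i\overset{d}{\to}b^i_r$. The plan is to reduce everything to a computation of the distributed image set and then compare it with the clause structure. By Lemma~\ref{flatten}, $d$ is either a single action $d_1$ or a flat tuple $\langle d_1,\ldots,d_m\rangle$ of singleton actions, one for each agent corresponding to a variable occurring in $F_i$. Hence $\xrightarrow{d}=\bigcap_{\ell}\xrightarrow{d_\ell}$, and
\[
Img(s^i,d)=\bigcap_{\ell=1}^m Img(s^i,d_\ell).
\]
The proof therefore comes down to deciding, for each clause, whether $b^i_r$ survives this intersection.

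First I would read off the bad-state part of each $Img(s^i,d_\ell)$ from the definition. Let $v_\ell$ be the variable of agent $a_\ell$, either some $z^i_m$ or an occurrence $x^i_j$. If $d_\ell$ is the true action, so that $\alpha_i^d(v_\ell)=\top$, then $b^i_r\in Img(s^i,d_\ell)$ iff the literal $v_\ell$ does not occur in $C^i_r$. If $d_\ell$ is the false action, then $b^i_r\in Img(s^i,d_\ell)$ iff $\neg v_\ell$ does not occur in $C^i_r$. In both cases, $b^i_r\in Img(s^i,d_\ell)$ iff $C^i_r$ contains no literal over $v_\ell$ that $\alpha_i^d$ makes true.

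Taking the intersection over $\ell$, $b^i_r\in Img(s^i,d)$ iff for every $\ell$ the clause $C^i_r$ has no literal on $v_\ell$ made true by $\alpha_i^d$. Every variable of $C^i_r$ is some $v_\ell$, because $d$ uses exactly the agents of the variables occurring in $F_i$, and by Lemma~\ref{corresponding} this is a bijection with the variables of $F_i$. So the condition says precisely that no literal of $C^i_r$ is true under $\alpha_i^d$, that is, $C^i_r$ is false. The consequence follows immediately: no $b^i_r$ is reachable iff every clause is true iff $\alpha_i^d$ satisfies $F_i$.

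The main obstacle is bookkeeping rather than anything conceptual. I must check that the image definitions for $s^i$ are the ones used for agents of $G_k$ with $k\ge i$, since the definition is stated for all $i\le k$ with superscript $k$. I must also check that the non-bad components ($s^\top$, $s^j$, $n^k_{x_j}$) do not interfere, which they cannot, because bad states are distinct from them. Finally, I would note that $m\ge 1$ is guaranteed by the assumption that every $F_i$ contains at least one variable, so the intersection is over a nonempty family.
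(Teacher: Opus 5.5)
Your proposal is correct and follows essentially the same route as the paper: both reduce the claim to the per-action fact that $s^i\xrightarrow{a}b^i_r$ holds iff the literal selected by $a$ does not occur in $C^i_r$, and then intersect over the components of $d$. The only difference is presentational. The paper argues the two directions separately, the converse by contraposition, whereas you get both at once from $Img(s^i,d)=\bigcap_\ell Img(s^i,d_\ell)$, and you also state explicitly the singleton-class and variable-coverage facts the paper uses tacitly.
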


\begin{proof}
($\Rightarrow$): Fix a clause $C_r^i$ of $F_i$ and suppose that $[s^i]_{G_k}\overset{d}{\to}[b_r^i]_{G_k}$. By the definition of $\overset{d}{\to}$, for every atomic action $a$ occurring in $d$, we have $s^i\overset{a}{\to}b_r^i$. 

Observation: Let $a$ be an atomic action corresponding to a variable $\ell$. If $a=t_\ell^k$, then the selected literal is $\ell$. By definition, $s^i\overset{a}{\to} b_r^i$ iff $\ell$ does not occur in $C_r^i$. In this case, either $\neg\ell$ occurs in $C_r^i$ or it does not. In the former case, $\neg\ell$ is evaluated as $\bot$ under the assignment induced by $d$. In the latter case, no literal of the variable $\ell$ contributes $\top$ to $C_r^i$. On the other hand, if $a=f_\ell^k$, then the selected literal is $\neg\ell$. By definition, $s^i\overset{a}{\to} b_r^i$ iff $\neg\ell$ does not occur in $C_r^i$. In this case, either $\ell$ occurs in $C_r^i$ or it does not. In the former case, $\ell$ is evaluated as $\bot$ under the assignment induced by $d$. In the latter case, no literal of the variable $\ell$ contributes $\top$ to $C_r^i$.

By the observation, whether $a$ is a true action or a false action, the literal selected by $a$ contributes $\bot$ to $C_r^i$ under $\alpha_i^d$. Hence, no literal of $C_r^i$ is evaluated as $\top$ under $\alpha_i^d$. Therefore, $C_r^i$ is false under $\alpha_i^d$.

($\Leftarrow$): Prove by contraposition. If $[s^i]_{G_k}\not\overset{d}{\to}[b_r^i]_{G_k}$, then by the definition of $\overset{d}{\to}$, there is an atomic action $a$ in $d$, s.t. $s^i\not\overset{a}{\to}b_r^i$. If $a=t_{\ell}^k$, by the definition of $\overset{a}{\to}$, $\ell$ occurs in $C_r^i$. Then $\ell$ is evaluated as $\top$ under $\alpha_i^d$. Similarly, if $a=f_{\ell}^k$, then $\neg\ell$ occurs in $C_r^i$. Thus, $\neg\ell$ is evaluated as $\top$ under $\alpha_i^d$. In both cases, the selected literal is evaluated as $\top$ under $\alpha_i^d$. Therefore, $C_r^i$ is true under $\alpha_i^d$.

\end{proof}

\medskip

Intuitively, $s^j$ corresponds to the case when $x_j$ is claimed to be true, while $n^k_{x_j}$ corresponds to the case when $x_j$ is claimed to be false.

\medskip

\begin{lemma}\label{negative-state}
  For every pair of layers $j,k\in[1,n]$ such that $j<k$, we have $\mathcal M,n^k_{x_j}\models\Theta_k$ iff $\mathcal M,n^k_{x_j}\models\neg\Psi_{k-1}$. As a consequence, $\mathcal{M},n_{x_j}^k\models \Theta_k$ iff $\mathcal{M},s^j\models \neg\Psi_{k-1}$.
\end{lemma}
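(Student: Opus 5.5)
The first equivalence is unfolded directly from the definition $\Theta_k = q\vee(r_k\wedge\neg\Psi_{k-1})$ and the valuation. Since $V(q)=\{s^\top\}$ and $n^k_{x_j}\neq s^\top$, we get $\mathcal M,n^k_{x_j}\not\models q$. Since $V(r_k)=\{n^k_{x_h}:1\le h<k\}$ contains $n^k_{x_j}$, we get $\mathcal M,n^k_{x_j}\models r_k$. Hence $\mathcal M,n^k_{x_j}\models\Theta_k$ iff $\mathcal M,n^k_{x_j}\models\neg\Psi_{k-1}$, and this step is pure bookkeeping.

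For the consequence, it suffices to show $\mathcal M,n^k_{x_j}\models\Psi_{k-1}$ iff $\mathcal M,s^j\models\Psi_{k-1}$, where $\Psi_{k-1}=Kh_{G_{k-1}}\Theta_{k-1}$. Because all epistemic relations are identities, every $\sim_{G_{k-1}}$-class is a singleton, so strategies for $G_{k-1}$ are simply partial maps from states to actions in $A^*_{G_{k-1}}$.

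The key observation is that, for every action $d\in A^*_{G_{k-1}}$, we have $Img(n^k_{x_j},d)=Img(s^j,d)$. By Lemma~\ref{flatten}, $d$ is a tuple of singleton actions of agents in $G_{k-1}$. The agents in $G_{k-1}$ carry superscript $k-1<k$, so by construction $Img(n^k_{x_j},a)=Img(s^j,a)$ for each such atomic $a$. Intersecting over the components of $d$ gives the equality for $d$. Consequently $d$ is executable on $\{n^k_{x_j}\}$ iff it is executable on $\{s^j\}$, and the two states have the same $d$-successors.

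Given a witnessing strategy $\sigma$ for $s^j$, I would define $\sigma'$ by $\sigma'(\{n^k_{x_j}\})=\sigma(\{s^j\})$ and $\sigma'=\sigma$ elsewhere. This is well defined: $n^k_{x_j}$ is not a successor of anything reachable under $G_{k-1}$-actions, since its only incoming transitions come from superscript-$k$ actions, so it never appears in executions from $s^j$. The complete executions of $\sigma'$ from $n^k_{x_j}$ are then exactly those of $\sigma$ from $s^j$ with the first node replaced.

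One degenerate case needs attention. If $\sigma$ is undefined at $s^j$, then $\{s^j\}$ is itself a leaf, so the witness requires $s^j\models\Theta_{k-1}$. The corresponding leaf $n^k_{x_j}$ must then also satisfy $\Theta_{k-1}$, which needs checking. Here $n^k_{x_j}$ is not $s^\top$ and carries no $r_{k-1}$ (it only carries $r_k$), so it fails $\Theta_{k-1}$. Likewise $s^j$ fails $\Theta_{k-1}$, since $s^j\neq s^\top$ and $s^j\notin V(r_{k-1})$. So in neither case can the empty strategy witness $\Psi_{k-1}$, and the defined step is forced at both states.

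The converse direction, transferring a strategy for $n^k_{x_j}$ to one for $s^j$, is symmetric. I would also check that $s^j$ itself is not reachable in a way that makes $\sigma$'s value at $s^j$ conflict. This is fine, because any reachable revisit of $s^j$ would create a cycle, contradicting finiteness of executions. I expect this strategy-transfer step, in particular the well-definedness and leaf conditions, to be the main obstacle. The image equality itself is immediate from the construction.
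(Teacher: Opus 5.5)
Your proposal is correct and follows essentially the paper's route. You unfold $\Theta_k$ via the valuation ($n^k_{x_j}\notin V(q)$, $n^k_{x_j}\in V(r_k)$), then use $Img(n^k_{x_j},a)=Img(s^j,a)$ for all $G_{k-1}$-actions together with the identity epistemic relations to transfer witnessing strategies. You are more careful than the paper's ``same complete executions'' remark, since you handle the case where the start node is itself a leaf by noting that neither $s^j$ nor $n^k_{x_j}$ satisfies $\Theta_{k-1}$.

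The one soft spot is the converse transfer, where the cycle argument is circular, because finiteness of the new strategy's executions is exactly what you are proving. Instead, note that $G_{k-1}$-transitions from $s^j$, and hence from $n^k_{x_j}$, lead only to $s^\top$, bad states, strictly lower entry states, and the $G_{k-1}$-dead ends $n^{k-1}_{x_h}$, so $s^j$ is never reached.
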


\begin{proof}
($\Rightarrow$): Suppose $\mathcal M,n^k_{x_j}\models\Theta_k$. Unfold the definition of $\Theta_k$, we infer that $\mathcal M,n^k_{x_j}\models q$ or $\mathcal M,n^k_{x_j}\models r_{k}\wedge \neg \Psi_{k-1}$. Since $n^k_{x_j}\not \in V(q)$ and $n^k_{x_j}\in V(r_k)$, we conclude that $\mathcal M,n^k_{x_j}\models \neg \Psi_{k-1}$. Since $Img(n^k_{x_j},a)=Img(s^j,a)$ holds for every atomic action $a$ of every group $G_\ell$ with $\ell<k$, and since all epistemic relations are identity relations, the two states have the same complete executions under every strategy for every group occurring in $\Psi_{k-1}$. Hence, $\mathcal{M},n^k_{x_j}\models\Psi_{k-1}$ iff $\mathcal{M},s^j\models\Psi_{k-1}$, and therefore $\mathcal{M},s^j\models\neg\Psi_{k-1}$.

($\Leftarrow$): Same reasoning as above, but in reverse direction.

\end{proof}

\medskip

We now move on to the correctness lemma for the reduction.

\medskip

\begin{lemma} \label{reduction_correctness}
  Let $\mathcal{I}$ be a SNSAT instance as above. Let $\langle \mathcal{M},s^n,\Psi_n\rangle$ be the output of the reduction. Then for every $k\in[1,n]$ and every $i\in[1,k]$, we have $v_{\mathcal I}(x_i)=\top$ iff $\mathcal M,s^i\models\Psi_k$. In particular, for every $i\in[1,n]$, we have $v_{\mathcal I}(x_i)=\top$ iff $\mathcal M,s^i\models\Psi_i$. 
\end{lemma}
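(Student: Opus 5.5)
Induction on $k$, the layer of the group, with an inner induction on $i\le k$; $i$ is the entry state being evaluated. Since all epistemic relations are identities, every $\sim_{G_k}$-class is a singleton. So $\mathcal M,s^i\models\Psi_k$ holds iff the least fixpoint computed in the proof of the Claim (Lemma~\ref{Correctness lemma}) contains $\{s^i\}$. Equivalently, there must be a well-founded strategy from $s^i$ all of whose leaves satisfy $\Theta_k$.

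First I will record which leaves are good or bad for $\Theta_k$. The state $s^\top$ satisfies $\Theta_k$ via $q$. No bad state $b^i_r$ satisfies $\Theta_k$, since it satisfies neither $q$ nor $r_k$. No entry state $s^j$ satisfies $\Theta_k$ either, so an entry state can never be a good leaf and must be played through. For $n^k_{x_j}$, Lemma~\ref{negative-state} gives that it satisfies $\Theta_k$ iff $\mathcal M,s^j\not\models\Psi_{k-1}$. By the outer induction hypothesis with $j\le k-1$, this holds iff $v_{\mathcal I}(x_j)=\bot$. For $k=1$ there are no negative-check states. Also, $n^k_{x_j}$ is a dead end for every $G_k$-action, so it is a leaf and never inner. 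The key step is then to understand which $G_k$-actions are executable at $s^i$. Every image set from $s^i$ contains $s^\top$, so executability is never the issue. The only actions with nonempty images at $s^i$ are those of agents $a^k_\ell$ for variables $\ell$ of $F_i$. Any other atomic action in the tuple kills the intersection and makes $d$ non-executable. By Lemma~\ref{flatten} and Lemma~\ref{corresponding}, an executable $d$ is therefore a partial assignment $\alpha$ to the variables of $F_i$.

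Next I will compute the successors of $s^i$ under such a $d$. Bad state $b^i_r$ is reached iff every chosen literal fails to occur in $C^i_r$. This is the argument of Lemma~\ref{bad-clause-sat}, which also works for partial tuples: a clause is hit iff no chosen literal satisfies it. For $j<i$, $s^j$ is reached iff $x_j$ is not chosen false, and $n^k_{x_j}$ is reached iff $x_j$ is not chosen true. If $x_j$ is unassigned, both are reached. I will show that the winning strategies from $s^i$ amount to the following. Pick an assignment $\alpha$ that satisfies every clause of $F_i$; partiality never helps, since unassigned variables satisfy nothing. Then $\{s^i\}$ is winning iff some such $\alpha$ sends each $x_j$ to a good place. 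If $\alpha(x_j)=\top$, the successor $s^j$ must itself be winning for $\Theta_k$, which by the inner induction on $i$ happens iff $v_{\mathcal I}(x_j)=\top$. If $\alpha(x_j)=\bot$, the successor $n^k_{x_j}$ must be good, i.e.\ $v_{\mathcal I}(x_j)=\bot$. Because every previous $x_j$ occurs in $F_i$, it has an agent, and the $s^j$ branch can be avoided only by choosing $x_j$ false. So a winning $\alpha$ exists iff $F_i(v_{\mathcal I}(x_1),\ldots,v_{\mathcal I}(x_{i-1}),Z_i)$ is satisfiable, i.e.\ iff $v_{\mathcal I}(x_i)=\top$.

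Well-foundedness is automatic because transitions from entry states only go to lower-indexed entry states. The step that makes the inner induction legitimate is to show that winning $s^j$ with $G_k$ toward $\Theta_k$ is the same as $v_{\mathcal I}(x_j)=\top$. Both the inner hypothesis and the relevant fixpoint concern the same group $G_k$, so this reduces to induction on $i$ for fixed $k$. The main obstacle I expect is the precise bookkeeping in the executability and successor analysis. In particular, I must check that foreign agents' actions really have empty images at $s^i$ and that partial tuples cannot dodge the $x_j$ branches. I also need to verify the base cases: $k=1$, where $\Theta_1=q$, and $i=1$, where $F_1$ has no previous variables so only $s^\top$ and bad states are reachable. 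The ``in particular'' clause is then the special case $k=i$.
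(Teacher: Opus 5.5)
Your proposal is correct and follows essentially the paper's route: an outer induction on $k$ and an inner induction on $i$, with $n^k_{x_j}$-leaves classified by Lemma~\ref{negative-state} plus the outer hypothesis, bad states excluded by Lemma~\ref{bad-clause-sat}, and successors $s^j$ handled by the inner hypothesis. The differences are minor: you phrase both directions through the one-step fixpoint equation, so you also use the inner hypothesis when going from $v_{\mathcal I}(x_i)=\top$ to $\mathcal M,s^i\models\Psi_k$, whereas the paper writes down one explicit strategy choosing a satisfying $\alpha_h$ at every $s^h$ with $v_{\mathcal I}(x_h)=\top$; and you rule out partial tuples by observing that an unassigned $x_j$ reaches both $s^j$ and $n^k_{x_j}$, whereas the paper extends $d$ to a total assignment, since adding actions only shrinks the successor set.
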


\begin{proof}
  Proceed by induction on $k$, which is the index of $\Psi_k$. For each fixed $k$, we prove the claim for all $i\le k$ by induction on $i$.

($k=1$): Suppose $v_{\mathcal{I}}(x_1)=\top$. By the definition of SNSAT, there is an assignment $\alpha$ for the variables occurring in $F_1$ such that $F_1$ is true under $\alpha$. Let $d$ be the distributed action induced by $\alpha$. By Lemma~\ref{bad-clause-sat}, there is no bad state $b_r^1$ such that $[s^1]_{G_1}\overset{d}{\to}[b_r^1]_{G_1}$. Hence, every complete execution generated by the strategy $\sigma$ with $\sigma([s^1]_{G_1})=d$ reaches only $[s^\top]_{G_1}$. Since $\mathcal M,s^\top\models q$, we get $\mathcal M,s^1\models Kh_{G_1}q$. In other words, $\mathcal M,s^1\models\Psi_1$.

Conversely, suppose $\mathcal M,s^1\models\Psi_1$. Since $\Psi_1=Kh_{G_1}q$, there is a strategy $\sigma$ such that every complete execution from $[s^1]_{G_1}$ according to $\sigma$ ends in a class satisfying $q$. Let $d=\sigma([s^1]_{G_1})$. We may assume that $d$ uses exactly the singleton agents corresponding to the variables occurring in $F_1$, since adding missing relevant singleton actions only shrinks the successor set and preserves success. Since $s^\top$ is the only state satisfying $q$, no bad state $b_r^1$ can be reached from $[s^1]_{G_1}$ by $d$. Hence, by Lemma~\ref{bad-clause-sat}, the assignment $\alpha_1^d$ satisfies $F_1$. Therefore, by the definition of SNSAT, $v_{\mathcal I}(x_1)=\top$.

Now let $k>1$. Assume as outer induction hypothesis that for every $i\in [1,k)$, 
\[
v_{\mathcal I}(x_i)=\top \iff \mathcal M,s^i\models\Psi_{k-1}.
\]

We need to prove for every $i\in [1,k]$,

\[
v_{\mathcal I}(x_i)=\top \iff \mathcal M,s^i\models\Psi_{k}.
\]

 Fix a $k$, we then prove the statement by induction on $i$. The inner induction hypothesis is that for every $h\in [1,i)$, where $i\le k$,
\[
v_{\mathcal I}(x_h)=\top \iff \mathcal M,s^h\models\Psi_k.
\]

We prove that $v_{\mathcal I}(x_i)=\top$ iff $\mathcal M,s^i\models\Psi_k$.

($\Rightarrow$): Suppose $v_{\mathcal I}(x_i)=\top$. Then by definition of SNSAT, $\exists Z_i:F_i(x_1,\ldots,x_{i-1},Z_i)$. We build the strategy $\sigma$ as follows. For every $h\le i$ such that $v_{\mathcal I}(x_h)=\top$, choose an assignment $\alpha_h$ such that $\alpha_h$ agrees with $v_{\mathcal I}$ on $x_1,\ldots,x_{h-1}$ and satisfies $F_h$. Let $d_h\in A^*_{G_k}$ be the distributed action induced by $\alpha_h$ according to Lemma~\ref{corresponding}, and set $\sigma([s^h]_{G_k})=d_h$. Leave $\sigma$ undefined elsewhere.

In particular, since $v_{\mathcal I}(x_i)=\top$, the strategy is defined at $[s^i]_{G_k}$. Let $d_i=\sigma([s^i]_{G_k})$. By Lemma~\ref{bad-clause-sat}, no bad state $b_r^i$ is reachable from $[s^i]_{G_k}$ by $d_i$. More generally, for every $h\le i$ such that $\sigma([s^h]_{G_k})$ is defined, no bad state $b_r^h$ is reachable from $[s^h]_{G_k}$ by $\sigma([s^h]_{G_k})$.

Now consider any one-step successor of a class $[s^h]_{G_k}$ on which $\sigma$ is defined. There are three possible cases:
\begin{itemize}
  \item If $[s^h]_{G_k}\overset{\sigma([s^h]_{G_k})}{\to}[s^{\top}]_{G_k}$, since $s^{\top}\in V(q)$, then $\mathcal{M}, s^{\top}\models \Theta_k$.
  \item If $[s^h]_{G_k}\overset{\sigma([s^h]_{G_k})}{\to}[n_{x_j}^k]_{G_k}$ for some $j<h$, then $\sigma([s^h]_{G_k})$ assigns $x_j^h$ to false. Hence, $v_{\mathcal I}(x_j)=\bot$. By the outer induction hypothesis, $\mathcal M,s^j\not\models\Psi_{k-1}$. By Lemma~\ref{negative-state}, $\mathcal M,n_{x_j}^k\models\Theta_k$.
  \item If $[s^h]_{G_k}\overset{\sigma([s^h]_{G_k})}{\to}[s^j]_{G_k}$ for some $j<h$, then $\sigma([s^h]_{G_k})$ assigns $x_j^h$ to true. Hence, $v_{\mathcal I}(x_j)=\top$, so $\sigma$ is defined at $[s^j]_{G_k}$.
\end{itemize}
Thus, along every complete execution from $[s^i]_{G_k}$ according to $\sigma$, the layer of entry states strictly decreases whenever the execution moves to another entry state. Hence, all complete executions are finite. Their leaves are either $[s^\top]_{G_k}$ or some $[n_{x_j}^k]_{G_k}$ satisfying $\Theta_k$. Therefore, $\mathcal M,s^i\models Kh_{G_k}\Theta_k$, that is, $\mathcal M,s^i\models\Psi_k$.

($\Leftarrow$): Suppose $\mathcal M,s^i\models\Psi_k$. Since $\Psi_k=Kh_{G_k}\Theta_k$, there is a $G_k$-strategy $\sigma$ such that every complete execution from $[s^i]_{G_k}$ according to $\sigma$ has a leaf satisfying $\Theta_k$. Let $d=\sigma([s^i]_{G_k})$. We may assume that $d$ uses exactly the singleton agents corresponding to the variables occurring in $F_i$, since adding missing relevant singleton actions only shrinks the successor set and preserves success. We first show that the assignment $\alpha_i^d$ satisfies $F_i$. No bad state $b_r^i$ is reachable from $[s^i]_{G_k}$ by $d$. This is because, if $[s^i]_{G_k}\overset{d}{\to}[b_r^i]_{G_k}$ for some $r$, then the one-step execution ending in $[b_r^i]_{G_k}$ would be a complete execution whose leaf does not satisfy $\Theta_k$, a contradiction. Hence, by Lemma~\ref{bad-clause-sat}, the assignment $\alpha_i^d$ satisfies $F_i$.

It remains to show that $\alpha_i^d$ agrees with $v_{\mathcal I}$ on the previous variables $x_1,\ldots,x_{i-1}$. Let $j<i$. If $d$ assigns $x_j^i$ to false, then by the definition of the transition relation at $s^i$, we have $[s^i]_{G_k}\overset{d}{\to}[n^k_{x_j}]_{G_k}$. Since every complete execution according to $\sigma$ has a leaf satisfying $\Theta_k$, we have $\mathcal M,n^k_{x_j}\models\Theta_k$. By Lemma~\ref{negative-state}, $\mathcal M,s^j\not\models\Psi_{k-1}$. By the outer induction hypothesis, $v_{\mathcal I}(x_j)=\bot$. If $d$ assigns $x_j^i$ to true, then by the definition of the transition relation at $s^i$, we have $[s^i]_{G_k}\overset{d}{\to}[s^j]_{G_k}$. Since every complete execution from $[s^i]_{G_k}$ according to $\sigma$ is successful, the continuation of $\sigma$ from $[s^j]_{G_k}$ is also successful, so $\mathcal M,s^j\models\Psi_k$. By the inner induction hypothesis, $v_{\mathcal I}(x_j)=\top$.

Therefore, $\alpha_i^d$ agrees with $v_{\mathcal I}$ on all previous variables $x_1,\ldots,x_{i-1}$. Since $\alpha_i^d$ satisfies $F_i$, it gives an assignment to $Z_i$ witnessing
$\exists Z_i\,F_i(x_1,\ldots,x_{i-1},Z_i)$. Thus, by the definition of SNSAT, $v_{\mathcal I}(x_i)=\top$.
\end{proof}

\medskip

Following Lemma~\ref{reduction_correctness}, we deduce

\medskip

\begin{theorem}\label{lower_bound}
  The problem \textsc{CheckDKH} is $\Delta_2^p$-hard.
\end{theorem}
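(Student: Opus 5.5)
The plan is a polynomial-time many-one reduction from \textsc{SNSAT}, which is $\Delta_2^p$-complete by \cite{SNSAT}. The map $\mathcal I\mapsto\langle\mathcal M,s^n,\Psi_n\rangle$ is the one constructed above. Hardness then follows from two facts: the map is computable in polynomial time, and $\mathcal I$ is positive iff $\mathcal M,s^n\models\Psi_n$. The second fact is the instance $i=k=n$ of Lemma~\ref{reduction_correctness}: $v_{\mathcal I}(x_n)=\top$ iff $\mathcal M,s^n\models\Psi_n$. Since $\Delta_2^p$ is closed under polynomial-time many-one reductions, every problem in $\Delta_2^p$ reduces to \textsc{CheckDKH} by composing with the reduction to \textsc{SNSAT}.

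First I would justify the preprocessing of $\mathcal I$ that the construction assumes: putting each $F_i$ into CNF, adding tautological clauses $(x_j\vee\neg x_j)$, and adding a fresh variable $z_i^\ast$ where needed. The last two steps preserve satisfiability of each $F_i$ under every valuation of $x_1,\dots,x_{i-1}$, so $v_{\mathcal I}$ is unchanged. For CNF, I would use the fact that \textsc{SNSAT} remains $\Delta_2^p$-complete when each $F_i$ is in CNF. Alternatively, a Tseitin transformation that puts the auxiliary variables into $Z_i$ preserves satisfiability of $\exists Z_i F_i$ for each fixed valuation of the earlier $x_j$, and it is polynomial.

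Next I would bound the size of the output. The number of states is $1+n+\sum_i m_i+\binom{n}{2}$. The number of agents is $\sum_k\bigl(\sum_{i\le k}p_i+\sum_{i\le k}(i-1)\bigr)$, which is $O(n\cdot|\mathcal I|+n^3)$. Each agent has two actions, and each image set has at most $|S|$ elements, so the transition relations have polynomial size. Only singleton $A_G$'s are nonempty, and they are listed explicitly, which matches the representation convention in the Preliminaries. The epistemic relations are identities and the valuation uses $n+1$ letters.

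The formula needs the most care. $\Psi_k$ contains $\Psi_{k-1}$ once inside $\Theta_k$, so $|\Psi_n|$ grows linearly in $n$ times the size of a group description. Writing each $G_k$ explicitly costs $O(|I|)$, so $|\Psi_n|=O(n|I|)$, which is polynomial. This non-duplication is the only place where blow-up could occur, and I expect it to be the main thing to check. Since $\Psi_{k-1}$ appears exactly once in $\Theta_k$, there is no exponential blow-up. With this bound and Lemma~\ref{reduction_correctness}, the theorem follows.
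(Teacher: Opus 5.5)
Your proposal is correct and follows the paper's own route. The paper likewise derives hardness from Lemma~\ref{reduction_correctness} with $i=k=n$ applied to the SNSAT reduction, and your size bounds and CNF/Tseitin justification just make explicit what the paper dismisses as ``easy to see'' and ``W.L.O.G.''. One small wording fix: at the end you need transitivity of polynomial-time many-one reductions, which your composition argument supplies, rather than closure of $\Delta_2^p$ under such reductions.
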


\medskip

Finally, combining Theorem~\ref{upper_bound} and Theorem~\ref{lower_bound}, we conclude

\medskip

\begin{corollary}
  The problem \textsc{CheckDKH} is $\Delta_2^p$-complete.
\end{corollary}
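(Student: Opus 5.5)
The corollary has two halves. Membership of \textsc{CheckDKH} in $\Delta_2^p$ is Theorem~\ref{upper_bound}, and $\Delta_2^p$-hardness is Theorem~\ref{lower_bound}. Together these are exactly the definition of $\Delta_2^p$-completeness, so the plan is to make sure both halves really hold with the stated hypotheses and then combine them.

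\textbf{Upper bound.} Theorem~\ref{upper_bound} rests on three facts.
\begin{itemize}
\item \emph{Correctness.} By Theorem~\ref{Correctness}, $\textsc{CheckerDKH}(\mathcal M,x,\phi)$ accepts iff $\mathcal M,x\models\phi$.
\item \emph{NP oracle.} \textsc{HasWinDistribution} is in $\mathsf{NP}$ (Lemma~\ref{HasWinDistribution in NP}). The certificate $(m,\langle (H_1,a_1),\dots,(H_m,a_m)\rangle)$ has $m\le|G|$, with each $H_i\subseteq G$ and $a_i$ taken from the explicitly listed action sets. It is therefore polynomial in the input, and checking it means running \textsc{IsValidDAction} and intersecting at most $|G|$ relations over $S$, both polynomial.
\item \emph{Polynomially many oracle calls.} For each $Kh_G$-subformula the while-loop runs at most $|S/{\sim_G}|+1$ times, and each pass makes at most $|S|$ oracle calls. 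Over at most $|\phi|$ subformulas this gives polynomially many calls in total. The computations of $\sim_G$, the quotient, and the Boolean and $K$ cases are deterministic polynomial-time work.
\end{itemize}
Hence the problem is in $\mathsf P^{\mathsf{NP}}=\Delta_2^p$.

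\textbf{Lower bound.} Theorem~\ref{lower_bound} follows from Lemma~\ref{reduction_correctness} with $i=k=n$: $v_{\mathcal I}(x_n)=\top$ iff $\mathcal M,s^n\models\Psi_n$. So $\mathcal I\mapsto\langle\mathcal M,s^n,\Psi_n\rangle$ is a many-one reduction from \textsc{SNSAT}. It is computable in polynomial time for the following reasons.
\begin{itemize}
\item The preprocessing (adding tautological clauses and fresh local variables) at most adds $O(n)$ clauses per layer.
\item The model has $O(n^2+\sum_i m_i)$ states and $O(n\cdot|\mathcal I|)$ agents, each with two singleton actions.
\item The explicit representation lists only the nonempty singleton $A_{\{a\}}$.
\item The formula $\Psi_n$ has size $O(n)$, since each $\Psi_k$ contains a single copy of $\Psi_{k-1}$.
\end{itemize}
Since \textsc{SNSAT} is $\Delta_2^p$-complete \cite{SNSAT}, and $\Delta_2^p$ is closed under polynomial-time many-one reductions, \textsc{CheckDKH} is $\Delta_2^p$-hard.

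The only point that needs some care is the size bounds above: that the formula does not blow up, and that the oracle certificates stay polynomial under the chosen representation of $\{A_G\}$. Given those, the corollary is the conjunction of the two theorems.
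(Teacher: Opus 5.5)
Your proposal is correct and follows the paper, which likewise proves the corollary simply by combining Theorem~\ref{upper_bound} and Theorem~\ref{lower_bound}. Your extra checks on certificate size, number of oracle calls, and the size of the reduction just unpack the proofs of those two theorems. One small correction: the hardness step needs transitivity of polynomial-time many-one reductions, so that every $\Delta_2^p$ problem reduces via \textsc{SNSAT} to \textsc{CheckDKH}; closure of $\Delta_2^p$ under such reductions is the property relevant to membership, not hardness.
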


\section{Conclusion and Future Work}
We show the $\Delta_2^p$-completeness for the model checking problem of the logic of distributed knowing how. A natural next step is to investigate the satisfiability problem for this logic.

\bibliographystyle{eptcs}

\begin{thebibliography}{1}
\providecommand{\bibitemdeclare}[2]{}
\providecommand{\surnamestart}{}
\providecommand{\surnameend}{}
\providecommand{\urlprefix}{Available at }
\providecommand{\url}[1]{\texttt{#1}}
\providecommand{\href}[2]{\texttt{#2}}
\providecommand{\urlalt}[2]{\href{#1}{#2}}
\providecommand{\doi}[1]{doi:\urlalt{https://doi.org/#1}{#1}}
\providecommand{\eprint}[1]{arXiv:\urlalt{https://arxiv.org/abs/#1}{#1}}
\providecommand{\bibinfo}[2]{#2}

\bibitemdeclare{inproceedings}{SNSAT}
\bibitem{SNSAT}
\bibinfo{author}{Fran{\c{c}}ois \surnamestart Laroussinie\surnameend}, \bibinfo{author}{Nicolas \surnamestart Markey\surnameend} \& \bibinfo{author}{Philippe \surnamestart Schnoebelen\surnameend} (\bibinfo{year}{2001}): \emph{\bibinfo{title}{Model Checking {CTL+} and {FCTL} Is Hard}}.
\newblock In \bibinfo{editor}{Furio \surnamestart Honsell\surnameend} \& \bibinfo{editor}{Marino \surnamestart Miculan\surnameend}, editors: {\slshape \bibinfo{booktitle}{Foundations of Software Science and Computation Structures}}, {\slshape \bibinfo{series}{Lecture Notes in Computer Science}} \bibinfo{volume}{2030}, \bibinfo{publisher}{Springer}, pp. \bibinfo{pages}{318--331}, \doi{10.1007/3-540-45315-6_21}.

\bibitemdeclare{inproceedings}{DKH2025}
\bibitem{DKH2025}
\bibinfo{author}{Bin \surnamestart Liu\surnameend} \& \bibinfo{author}{Yanjing \surnamestart Wang\surnameend} (\bibinfo{year}{2025}): \emph{\bibinfo{title}{Distributed Knowing How}}.
\newblock In \bibinfo{editor}{Adam \surnamestart Bjorndahl\surnameend}, editor: {\slshape \bibinfo{booktitle}{{\rm Proceedings Twentieth Conference on} Theoretical Aspects of Rationality and Knowledge, {\rm D\"usseldorf, Germany, July 14-16, 2025}}}, {\slshape \bibinfo{series}{Electronic Proceedings in Theoretical Computer Science}} \bibinfo{volume}{437}, \bibinfo{publisher}{Open Publishing Association}, pp. \bibinfo{pages}{80--97}, \doi{10.4204/EPTCS.437.11}.

\bibitemdeclare{inproceedings}{TogetherWeKnowHow2017}
\bibitem{TogetherWeKnowHow2017}
\bibinfo{author}{Pavel \surnamestart Naumov\surnameend} \& \bibinfo{author}{Jia \surnamestart Tao\surnameend} (\bibinfo{year}{2017}): \emph{\bibinfo{title}{Together We Know How to Achieve: An Epistemic Logic of Know-How (Extended Abstract)}}.
\newblock In \bibinfo{editor}{J{\'e}r{\^o}me \surnamestart Lang\surnameend}, editor: {\slshape \bibinfo{booktitle}{Proceedings of TARK 2017}}, {\slshape \bibinfo{series}{Electronic Proceedings in Theoretical Computer Science}} \bibinfo{volume}{251}, pp. \bibinfo{pages}{441--453}, \doi{10.4204/EPTCS.251.32}.

\bibitemdeclare{article}{TogetherWeKnowHow2018}
\bibitem{TogetherWeKnowHow2018}
\bibinfo{author}{Pavel \surnamestart Naumov\surnameend} \& \bibinfo{author}{Jia \surnamestart Tao\surnameend} (\bibinfo{year}{2018}): \emph{\bibinfo{title}{Together We Know How to Achieve: An Epistemic Logic of {Know-How}}}.
\newblock {\slshape \bibinfo{journal}{Artificial Intelligence}} \bibinfo{volume}{262}, pp. \bibinfo{pages}{279--300}, \doi{10.1016/j.artint.2018.06.007}.

\bibitemdeclare{inproceedings}{Wang2015}
\bibitem{Wang2015}
\bibinfo{author}{Yanjing \surnamestart Wang\surnameend} (\bibinfo{year}{2015}): \emph{\bibinfo{title}{A Logic of Knowing How}}.
\newblock In: {\slshape \bibinfo{booktitle}{Logic, Rationality, and Interaction}}, {\slshape \bibinfo{series}{Lecture Notes in Computer Science}} \bibinfo{volume}{9394}, \bibinfo{publisher}{Springer}, pp. \bibinfo{pages}{392--405}, \doi{10.1007/978-3-662-48561-3_32}.

\bibitemdeclare{article}{Wang2018}
\bibitem{Wang2018}
\bibinfo{author}{Yanjing \surnamestart Wang\surnameend} (\bibinfo{year}{2018}): \emph{\bibinfo{title}{A Logic of Goal-Directed Knowing How}}.
\newblock {\slshape \bibinfo{journal}{Synthese}} \bibinfo{volume}{195}(\bibinfo{number}{10}), pp. \bibinfo{pages}{4419--4439}, \doi{10.1007/s11229-016-1272-0}.

\end{thebibliography}

\end{document}